\providecommand{\algorithmname}{Algorithm}
\providecommand{\U}[1]{\protect\rule{.1in}{.1in}}
\newtheorem{theorem}{Theorem}
\newtheorem{definition}{Definition}
\newtheorem{proposition}{Proposition}
\begin{document}

\title{Multi-Branch Matching Pursuit\\
with applications to MIMO radar}

\author{Marco Rossi,~\IEEEmembership{Student~Member,~IEEE,} Alexander M.
Haimovich,~\IEEEmembership{Fellow,~IEEE,} \and and Yonina C. Eldar,~\IEEEmembership{Fellow,~IEEE}%
\thanks{M. Rossi and A. H. Haimovich are with New Jersey Institute of Technology,
Newark, NJ 07102, USA. (e-mail:\ marco.rossi@njit.edu, haimovic@njit.edu).
Y. C. Eldar is with Technion - Israel Institute of Technology, Haifa
32000, Israel (e-mail: yonina@ee.technion.ac.il).%
}}
\maketitle
\begin{abstract}
We present an algorithm, dubbed Multi-Branch Matching Pursuit (MBMP),
to solve the sparse recovery problem over redundant dictionaries.
MBMP combines three different paradigms: being a greedy method, it
performs iterative signal support estimation; as a rank-aware method,
it is able to exploit signal subspace information when multiple snapshots
are available; and, as its name foretells, it leverages a multi-branch
(i.e., tree-search) strategy that allows us to trade-off hardware
complexity (e.g. measurements) for computational complexity. We derive
a sufficient condition under which MBMP can recover a sparse signal
from noiseless measurements. This condition, named MB-coherence, is
met when the dictionary is sufficiently incoherent. It incorporates
the number of branches of MBMP and it requires fewer measurements
than other conditions (e.g. the Neuman ERC or the cumulative coherence).
As such, successful recovery with MBMP is guaranteed for dictionaries
that do not satisfy previously known conditions.
\end{abstract}
\begin{keywords} Sparse recovery algorithm, compressive sensing,
support estimation, matching pursuit, exact recovery condition. \end{keywords}

\section{Introduction}

\PARstart{L}{inear} inverse problems can be found throughout
engineering and the mathematical sciences. Usually these problems
are ill-conditioned or underdetermined, so that regularization must
be introduced in order to obtain meaningful solutions. Sparsity constraints
have emerged as a fundamental type of regularizer, and in the last
decade, an enormous body of work has been generated around the theory
of \textit{compressed sensing} \cite{EldarCS}. Radar has been among
the many areas where compressive sensing has found application, and
in particular, sparse recovery has been effectively applied to multiple
input multiple output (MIMO) radar \cite{Yu10,Rossi13,Strohmer12}.
To fit sparse recovery in localization applications, one generates
a grid of possible targets' locations and an associated unknown vector
of responses, such that only locations associated with targets are
non-zero. Therefore, the localization problem aims to recover the
support of such unknown vector (non-zero elements of the vector).

Compressive sensing seeks to recover an $n\times l$ matrix $\mathbf{X}$
from a small number of linear observations $\mathbf{Y}=\mathbf{AX}$
(possibly corrupted by noise), where the $m\times n$ matrix $\mathbf{A}$,
with $m\ll n$, is commonly referred to as \emph{measurement matrix}
or \emph{dictionary}, and its columns are called \emph{atoms}. While
the linear system is highly underdetermined ($m\ll n$), the inverse
problem still has a unique solution if $\mathbf{X}$ is \textit{sparse},
i.e., it has only $K$ non-zero norm rows out of $n$ (with $K\leq m\ll n$).
In this case, the problem of recovering the signal $\mathbf{X}$ from
$\mathbf{Y}$ can be cast as a non-convex combinatorial $\ell_{0}$-norm
problem, i.e., $\min_{\mathbf{X}}\left\Vert \mathbf{Y}-\mathbf{AX}\right\Vert _{F}$
$\ $s.t.$\ \left\Vert \mathbf{X}\right\Vert _{0}\leq K$, where $\left\Vert \mathbf{X}\right\Vert _{0}$
counts the number of non-zero norm rows of $\mathbf{X}$. In the following,
we will refer to rows of $\mathbf{Y}$ as \textit{measurements}, and
to the columns of $\mathbf{Y}$ as \textit{snapshots}. The $\ell_{0}$-norm
problem is known also under other names, such as sparse approximation
or highly nonlinear approximation \cite{TroppWright}, and it can
be related to the Deterministic Maximum Likelihood (DML) estimator
\cite{stoica89,vanTrees}. Both $\ell_{0}$-norm minimization and
DML problems require a multi-dimensional search with exponential complexity
\cite{donnho}, which is infeasible in practical scenarios. A core
algorithmic question arises for a given class of dictionaries, how
does one design a fast algorithm that provably recovers a $K$-sparse
input signal?

Finding conditions that guarantee correct recovery with practical
algorithms has been an active topic of research and one of the underpinnings
of compressive sensing theory. Compressive sensing theory \cite{EldarCS}
shows that it is possible to recover any $K$-sparse signal $\mathbf{X}$
using a practical algorithms (e.g., the relaxation of the $\ell_{0}$-norm
to an $\ell_{1}$-norm, called Basis Pursuit (BP) or LASSO \cite{Candes08}),
if the measurement matrix $\mathbf{A}$ satisfies specific properties.
For instance, a correct solution is guaranteed, if the matrix is sufficiently
incoherent (as measured by the cumulative coherence \cite{Troop})
or if it satisfies the restricted isometry property (RIP). Such properties
are satisfied with high probability for a wide class of random measurement
matrices (e.g. Gaussian, Bernoulli, or partial Fourier), as long as
a sufficient number of measurements is available (e.g. $m>\beta K\log n$
for some constant $\beta$) \cite{EldarCS}, but they may not hold
when the measurement matrix is structured (e.g., in MIMO radar \cite{Rossi13}).

While BP (or LASSO) has strong recovery guarantees, its complexity
is still considerably high for real-world implementations. As a result,
many other methods have been proposed, and the area is still very
active. These methods target a complexity reduction (from BP) using
sophisticated convex optimization theory concepts \cite{gradient,nesta,tfocs,RahutStromer},
graphical methods \cite{AMP}, reweighting family \cite{reweighting,WipfIterative},
the M-FOCUSS algorithm \cite{focuss}, local solutions of non-convex
relaxations, such as the $\ell_{p}$-norm (with $p<1$) \cite{Chartrand},
or simple, but effective, matching pursuit strategies (also known
as greedy algorithms) that estimate the support one index at a time.
The latter family includes Orthogonal Matching Pursuit (OMP) \cite{OMP},
Order Recursive Matching Pursuit (ORMP) (which is also known as Orthogonal
Least Squares) \cite{OLS} and Rank Aware-Orthogonal Regularized Matching
Pursuit (RA-ORMP) \cite{Davies10}. In some extensions of the matching
pursuit, at each iteration, more than one index is added to the provisional
support. Notable examples are CoSaMP \cite{CoSaMP} and IHT \cite{IHT}.
See \cite{EldarCS,TroppWright} for an overview of sparse recovery
algorithms.

This paper illustrates the MBMP algorithm, first proposed in \cite{Rossi12},
which builds upon the low complexity matching pursuit by leveraging
a multi-branch (i.e., tree-search) strategy. Similar to MBMP, matching
pursuit has been used in conjunction with tree-search strategies to
improve reconstruction performance. Tree-search strategies based on
matching pursuit are proposed in \cite{CotterTree,Schniter}, and
multi-branch generalizations of OMP appear in \cite{Flexible TreeOMP,A*OMP}.
In these works no multi-branch based recovery guarantee is provided.
Recently, another multi-branch generalization of OMP, called Multipath
Matching Pursuit (MMP), was proposed in \cite{MMP} together with
a recovery guarantee based on RIP. However, such guarantee does not
improve upon the RIP guarantee of BP. Moreover, whereas tree-search
algorithms in the literature focus on the SMV setup, MBMP addresses
the general MMV setup where, being rank aware, it takes advantage
of the signal subspace information. To avoid possible confusion, we
remark that MBMP can address the recovery of any sparse signal, as
it does not impose an additional structure on the sparse signals (e.g.,
tree-structured dictionary \cite{Jost TreeBased Dictionary}).

This work expands the literature by formulating recovery guarantees
for MBMP in a noisy setup: (\textit{i}) A sufficient condition under
which MBMP recovers any sparse signal belonging to a given support;
(\textit{ii}) A sufficient condition under which MBMP can recover
any $K$-sparse signal. Condition (\textit{i}), named Multi-Branch
Exact Recovery Condition (MB-ERC), generalizes the well-known Tropp's
ERC \cite{Troop} to a multi-branch algorithm. Condition (\textit{ii}),
named MB-coherence, generalizes Neuman ERC \cite{ERC tikhonov} to
a multi-branch algorithm. MB-coherence is met when the dictionary
is sufficiently incoherent and it provides a guideline to design the
multi-branch structure of MBMP. In contrast to other recovery guarantees
for tree-structure algorithms (e.g., MMP), both MB-ERC and MB-coherence
conditions improves the state-of-the-art in the sense that they enables
to guarantee MBMP success for dictionaries that do not satisfy previously
known conditions (e.g., ERC or Neuman ERC). Due to its ability to
trade-off measurements with computational complexity, MBMP is particularly
well suited to applications in which measurements are very expensive,
such as in radar applications where the number of measurements is
commensurate with the number of antenna elements.

The rest of the paper is organized as follows: Section \ref{Sec: Problem}
introduces the sparse recovery problem; Section \ref{Sec_MBMP} details
the proposed algorithm; in Section \ref{Sec_MBMP_guaranty}, we develop
recovery guarantees for MBMP; Section \ref{Sec_Numerical} contains
numerical results to demonstrate the potential of the MBMP algorithm
in the MIMO radar sparse localization framework. Section \ref{Sec_conc}
provides the conclusions.

The following notation is used: boldface denotes matrices (uppercase)
and vectors (lowercase); for a matrix $\mathbf{A}$, $\mathbf{A}\left(i,j\right)$
denotes the element at $i$-th row and $j$-th column. The complex
conjugate operator is $\left(\cdot\right)^{\ast}$, the transpose
operator is $\left(\cdot\right)^{T}$, the complex conjugate-transpose
operator is $\left(\cdot\right)^{H}$, and the pseudo-inverse operator
is $\left(\cdot\right)^{\dag}$. For a full rank matrix $\mathbf{X}\in\mathbb{C}^{m\times n}$
with $m\geq n$, we have $\mathbf{X}^{\dag}=\left(\mathbf{X}^{H}\mathbf{X}\right)^{-1}\mathbf{X}^{H}$.
The Frobenius norm of $\mathbf{X}$ is $\left\Vert \mathbf{X}\right\Vert _{F}$,
the $\ell_{1}$-induced norm is $\left\Vert \mathbf{X}\right\Vert _{1}\triangleq\max_{j}\sum_{i}\left\vert \mathbf{X}\left(i,j\right)\right\vert $
and the $\ell_{\infty}$-induced norm is $\left\Vert \mathbf{X}\right\Vert _{\infty}\triangleq\max_{i}\sum_{j}\left\vert \mathbf{X}\left(i,j\right)\right\vert $.
Given a set $S$ of indices, $\left\vert S\right\vert $ denotes its
cardinality, $\mathbf{\mathbf{A}}_{S}$ is the sub-matrix obtained
by considering only the columns indexed in $S$, and $\Pi_{\mathbf{A}_{S}}^{\perp}\triangleq\mathbf{I}-\mathbf{\mathbf{A}}_{S}^ {}\mathbf{\mathbf{A}}_{S}^{\dag}$
is the orthogonal projection matrix onto the null space of $\mathbf{\mathbf{A}}_{S}^{H}$.
Given two sets of indices, $S$ and $S^{\prime}$, $S\setminus S^{\prime}$
contains the indices of $S$ which are not present in $S^{\prime}$.
We define the support $S$ of a matrix $\mathbf{X}$ as the set of
non-zero norm rows indices, and we define $\left\Vert \mathbf{X}\right\Vert _{0}\triangleq\left|S\right|$.
We say that $\mathbf{X}$ is $K$-sparse if $\left\Vert \mathbf{X}\right\Vert _{0}\leq K$.

\section{Sparse Recovery Problem\label{Sec: Problem}}

In a noiseless setting, sparse recovery seeks the sparsest solution
to a linear system of equations \cite{TroppWright}:
\begin{equation}
\min_{\mathbf{x}}\left\Vert \mathbf{x}\right\Vert _{0}\text{ \ s.t. \ }\mathbf{y}=\mathbf{Ax.}\label{eq: SMV}
\end{equation}

This setup is known as Single Measurement Vector (SMV), highlighting
the fact that a single vector of measurements $\mathbf{y}$ is available.
More generally, when multiple measurement vectors have the same support,
the setting is known as Multiple Measurement Vectors (MMV) or joint
sparse. In this case, the model is $\mathbf{Y}=\mathbf{AX}$, where
$\mathbf{Y}\in\mathbb{C}^{m\times l}$ is the observed signal matrix,
$\mathbf{A}\in\mathbb{C}^{m\times n}$ is the measurement matrix and
the matrix $\mathbf{X}\in\mathbb{C}^{n\times l}$ is the unknown signal.
The unknown signal $\mathbf{X}$ is sparse since it has only $K\ll n$
non-zero norm rows. The MMV sparse recovery problem is to estimate
the sparse matrix $\mathbf{X}$. It has been shown \cite{EldarCS}
that, under certain conditions on the matrix $\mathbf{A}$ and the
sparsity $K$, the sparse matrix $\mathbf{X}$ can be recovered from
linear measurements $\mathbf{Y}$ by solving the nonconvex $l_{0}$-norm
problem:
\begin{equation}
\min_{\mathbf{X}}\left\Vert \mathbf{X}\right\Vert _{0}\text{ \ s.t. \ }\mathbf{Y}=\mathbf{AX},\label{eq: L0}
\end{equation}
where $\left\Vert \mathbf{X}\right\Vert _{0}$ counts the number of
non-zero norm rows of $\mathbf{X}$. In this work, we assume that
$\operatorname*{spark}\left(\mathbf{A}\right)>2K-\operatorname*{rank}\left(\mathbf{X}\right)+1$,
where $\operatorname*{spark}\left(\mathbf{A}\right)$ is the smallest
number of linearly dependent columns of the matrix $\mathbf{A}$.
This is a necessary and sufficient condition for $\mathbf{Y}=\mathbf{AX}$
to uniquely determine any $K$-sparse matrix $\mathbf{X}$ \cite{Davies10,Wax89}.

In the presence of noise, the measurements comply with
\begin{equation}
\mathbf{Y}=\mathbf{AX}+\mathbf{E}\label{eq: Ynoisy}
\end{equation}
where $\mathbf{E}\in\mathbb{C}^{m\times l}$ is the noise term. In
this scenario, the sparse matrix $\mathbf{X}$ can be recovered by
solving a relaxation of (\ref{eq: L0}), $\min_{\mathbf{X}}\left\Vert \mathbf{X}\right\Vert _{0}$
\ s.t. \ $\left\Vert \mathbf{Y}-\mathbf{AX}\right\Vert _{F}\leq\epsilon$.
The Frobenius norm is used when the noise is supposed to be i.i.d.
Gaussian distributed, but different norms should be used otherwise.
Other formulations can also be used: a Lagrangian formulation, $\min_{\mathbf{X}}\left\Vert \mathbf{Y}-\mathbf{AX}\right\Vert _{F}+\nu\left\Vert \mathbf{X}\right\Vert _{0}$,
or a cardinality-constrained formulation
\begin{equation}
\min_{\mathbf{X}}\left\Vert \mathbf{Y}-\mathbf{AX}\right\Vert _{F}\text{ \ s.t. \ }\left\Vert \mathbf{X}\right\Vert _{0}\leq K.\label{eq: L0_K}
\end{equation}
where the parameters $\epsilon$, $\nu$ and $K$ depend on prior
information, e.g., noise level or signal sparsity.

In the following, we detail the MBMP algorithm to address (\ref{eq: L0_K})
when the sparsity level $K$ is known. In scenarios when $K$ is unknown
and only $\epsilon$, or $\nu$, are available, MBMP can be used to
solve the Lagrangian formulation ($\min_{\mathbf{X}}\left\Vert \mathbf{Y}-\mathbf{AX}\right\Vert _{F}+\nu\left\Vert \mathbf{X}\right\Vert _{0}$),
or the residual constrained formulation ($\min_{\mathbf{X}}\left\Vert \mathbf{X}\right\Vert _{0}$
\ s.t. \ $\left\Vert \mathbf{Y}-\mathbf{AX}\right\Vert _{F}\leq\epsilon$),
with minor modifications to the algorithm's termination criteria and
support selection \cite{Rossi12b}.

While MBMP addresses problem (\ref{eq: L0_K}) for any measurement
matrix $\mathbf{A}$, in this work we focus on radar (e.g., target
localization) applications. In general, target localization consists
of two stages: detection and estimation \cite{vanTrees}. While detection
is a process that inherently relies on a single target point of view,
and deals with lower SNR levels, estimation builds on detection by
seeking to improve the accuracy of localization for detected targets.
In this work, we adopt an estimation point-of-view, which assumes
that the sparsity level $K$ (e.g., number of targets) is known, and
requires a medium to high SNR level. We formalize the latter condition
by assuming that the SNR is sufficient to guarantee that the support
of the combinatorial problem (\ref{eq: L0_K}) solution coincides
with the true support. As problem (\ref{eq: L0_K}) can be related
to the DML estimator, this assumption implies that such estimator
achieves the Cramér-Rao bound \cite{stoica89}. Our goal is to guarantee
a similar performance with reduced complexity, i.e., using MBMP.

In order to detail MBMP, it is instructive to first reformulate problem
(\ref{eq: L0_K}) in terms of the support $S$ of the solution $\mathbf{X}$.
In particular, (\ref{eq: L0_K}) is equivalent to
\begin{equation}
\min_{S}\left\Vert \Pi_{\mathbf{A}_{S}}^{\perp}\mathbf{Y}\right\Vert _{F}\ \ \text{s.t.}\ \ \left\vert S\right\vert \leq K.\label{eq: Pfocus}
\end{equation}
The reformulation follows by noticing that the minimization with respect
to $\mathbf{X}$ in (\ref{eq: L0_K}) can be separated into the minimization
with respect to the support $S$ and the minimization with respect
to the actual non-zero value of $\mathbf{X}$. In particular, assuming
that the spark condition is satisfied (i.e., $\operatorname*{spark}\left(\mathbf{A}\right)>2K-\operatorname*{rank}\left(\mathbf{X}\right)+1$),
for a given support $S$, the optimal non-zero value of $\mathbf{X}$
is given by the least square solution: $\mathbf{X}_{S}^{\ast}=\mathbf{\mathbf{A}}_{S}^{\dag}\mathbf{Y}$.
This reduces problem (\ref{eq: L0_K}) to problem (\ref{eq: Pfocus}).

\section{Multi-Branch Matching Pursuit\label{Sec_MBMP}}

Here we introduce MBMP, a multi-branch algorithm, which belongs to
the matching pursuit family and aims to solve problem (\ref{eq: Pfocus}).
We first discuss previous algorithms, and then we detail MBMP.

\subsection{Matching pursuit}

We start by providing an overview of matching pursuit \cite{EldarCS}.
This strategy starts with an empty provisional support $C=\emptyset$,
and then adds a new index to $C$ at each iteration, based on a selection
strategy. For example, in OMP, the index $g$ that maximizes $\left\Vert \mathbf{a}_{g}^{H}\Pi_{\mathbf{A}_{C}}^{\perp}\mathbf{Y}\right\Vert _{2}$
is selected. This selection strategy may be refined in two ways: a
dictionary refinement and a subspace refinement.

The \textit{dictionary refinement} applies when a non-empty provisional
support $C$ is already available. In this case, instead of using
the original dictionary's atoms, the current dictionary is projected
on the orthogonal subspace of $\mathbf{A}_{C}$, i.e., $\mathbf{\breve{a}}_{g}^{C}\triangleq\Pi_{\mathbf{A}_{C}}^{\perp}\mathbf{a}_{g}$,
and each atom is renormalized according to,
\begin{equation}
\mathbf{\bar{a}}_{g}^{C}\triangleq\left\{ \begin{array}{cc}
\mathbf{\breve{a}}_{g}^{C}/\left\Vert \mathbf{\breve{a}}_{g}^{C}\right\Vert _{2} & \text{if }\left\Vert \mathbf{\breve{a}}_{g}^{C}\right\Vert _{2}>0\\
\mathbf{0} & \text{otherwise}
\end{array}\right..\label{eq: dictionary refinement}
\end{equation}
The dictionary refinement procedure (\ref{eq: dictionary refinement})
distinguishes ORMP from OMP \cite{OMPvsOLS}: ORMP evaluates the inner
product between the residual and the modified atoms $\mathbf{\bar{a}}_{g}^{C}$,
while OMP computes the inner product using $\mathbf{\breve{a}}_{g}^{C}$.

The \textit{subspace refinement} is possible in an MMV scenario (when
$\operatorname*{rank}\left(\mathbf{X}\right)>1$). In such case, rather
than evaluating the norm of the inner product $\left\Vert \mathbf{a}_{g}^{H}\Pi_{\mathbf{A}_{C}}^{\perp}\mathbf{Y}\right\Vert _{2}$
using the residual $\Pi_{\mathbf{A}_{C}}^{\perp}\mathbf{Y}$, one
may use an orthonormal basis $\mathbf{U}$ of $\Pi_{\mathbf{A}_{C}}^{\perp}\mathbf{Y}$,
and compute $\left\Vert \mathbf{a}_{g}^{H}\mathbf{U}\right\Vert _{2}$.
The matrix $\mathbf{U}$ is also known as the signal subspace.

Depending on how refinement strategies are combined (dictionary refinement
and/or subspace refinement), four different algorithms are obtained.
Three of them have been already introduced in the literature \cite{Davies10}:
if dictionary and residual refinements are not used, we have the Simultaneous
Orthogonal Matching Pursuit (SOMP), which extends OMP to the general
MMV scenario; if dictionary refinement is not used, but residual refinement
is used, we have RA-OMP (which is not fully rank-aware); finally,
if both dictionary and residual refinements are used, we get the best
algorithm, namely RA-ORMP, which is fully rank-aware. In particular,
the so-called ``rank awareness'' means that, assuming $\operatorname*{spark}\left(\mathbf{A}\right)>2K-\operatorname*{rank}\left(\mathbf{X}\right)+1$
and considering a noiseless scenario, whenever the received signal
$\mathbf{Y}$ is full rank, RA-ORMP recovers the correct support with
probability one.
\begin{figure}[ptb]
\centering{}\includegraphics[width=3.2102in,height=1.3638in]{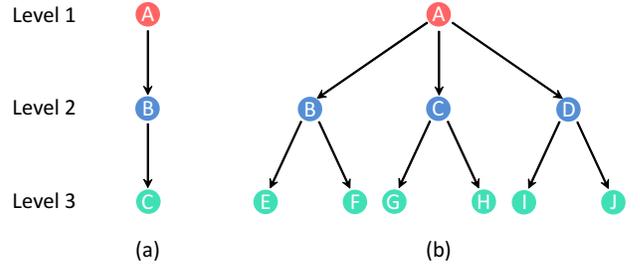}\caption{Graph of MBMP algorithm: (a) for a branch vector $\mathbf{d}=\left[1,1\right]$
- MBMP reduces to RA-ORMP; (b) for a branch vector $\mathbf{d}=\left[3,2\right]$.}
\label{fig: tree}
\end{figure}

\subsection{MBMP}

The proposed MBMP algorithm generalizes RA-ORMP by including a multi-branch
structure. In particular, it is possible to visualize RA-ORMP as a
chain of nodes, depicted in Fig. \ref{fig: tree}-(a). Node A is tagged
with an empty support. A new index is selected following a chosen
selection strategy, and it becomes the provisional support of node
B. To solve (\ref{eq: Pfocus}), this procedure is repeated until
level $K+1$ is reached.

Instead of a chain of nodes, the MBMP algorithm may be visualized
as a tree of nodes as shown in Fig. \ref{fig: tree}-(b), where each
node is allowed to have multiple children (node A is the parent of
nodes B, C and D; B is the parent of E and F). For instance, in Fig.
\ref{fig: tree}-(b), node A has $3$ branches, resulting in $3$
nodes at level $2$. Node A is tagged with an empty support. Then,
the index $g$ that maximizes $\left\Vert \mathbf{a}_{g}^{H}\mathbf{U}\right\Vert _{2}$
(where $\mathbf{U}$ is the signal subspace) becomes the provisional
support of node B. While RA-ORMP doesn't have any other node at level
2, with MBMP, the index $g$ that gives the \textit{second} largest
value of $\left\Vert \mathbf{a}_{g}^{H}\mathbf{U}\right\Vert _{2}$
is assigned to the provisional support of node C. Similarly, the index
$g$ that gives the \textit{third} largest value of $\left\Vert \mathbf{a}_{g}^{H}\mathbf{U}\right\Vert _{2}$
is assigned to the provisional support of node D. One of these atom
indices will necessarily be part of the solution returned by the algorithm.
Then, MBMP continues to populate nodes at level $3$. For example,
consider node B. Since node B has two branches, it has \textit{two}
children. Following the selection strategy, \textit{two} new indices
are selected. Each of these is added to the provisional support of
node B and used to tag node E and F, respectively. This procedure
is performed for all nodes at level $2$ (i.e., nodes C and D), thus
populating nodes G, H, I and J. The process stops when all nodes at
level $K+1$ have been populated. The support $C$ achieving the minimum
value of $\left\Vert \Pi_{\mathbf{A}_{C}}^{\perp}\mathbf{Y}\right\Vert _{F}$
is elected as the solution to (\ref{eq: Pfocus}).

The MBMP tree depends on the number of levels and on the number of
branches at each level (assumed constant for nodes within the same
level of the tree). The MBMP structure can be specified using a vector
$\mathbf{d}\triangleq\left[d_{1},\ldots,d_{K}\right]$ referred to
as \textit{branch vector}: $d_{i}$ represents the number of branches
of each node at level $i$. For instance, the tree in Fig. \ref{fig: tree}-(a)
has $\mathbf{d}=\left[1,1\right]$ while the tree in Fig. \ref{fig: tree}-(b)
has $\mathbf{d}=\left[3,2\right]$ (node A at level $1$ has $d_{1}=3$
branches, and each node at level $2$ (i.e., B, C, and D) possesses
$d_{2}=2$ branches). We call \textit{root} node the node at level
$1$ (i.e., node A in Fig. \ref{fig: tree}), and $\mathbf{U}=\operatorname{orth}\left(\mathbf{Y}\right)$
denotes an estimate of the signal subspace (see \cite{vanTrees,Davies10}
for an overview of signal subspace estimation).

The pseudo-code of the MBMP algorithm is detailed in the following
table.
\begin{algorithm}
\caption{Multi-branch matching pursuit algorithm}

\textbf{Input}: $\mathbf{Y}\in\mathbb{C}^{m\times l}$, $\mathbf{A}\in\mathbb{C}^{m\times G}$,
and $\mathbf{d}\in\mathbb{N}^{K}$

\textbf{Output}: Support of approximate solution to problem (\ref{eq: Pfocus})

1:\ \ Initialize root node (tagged with $C=\emptyset$ and $\bar{C}=\emptyset$)

2:\ \ Set $\mathbf{U}=\mathbf{Y}$, $f_{opt}=+\infty$

3: \ \textbf{for }$\forall$ node without children at level $i\leq K$

4: \ \qquad{}\textbf{if} $l>1$: Set $\mathbf{U}=\operatorname{orth}\left(\Pi_{\mathbf{A}_{C}}^{\perp}\mathbf{Y}\right)$

5: \ \qquad{}\textbf{for} $j=1,\ldots,d_{i}$

6: \qquad{}\qquad{}\textbf{$\hat{g}_{j}\in\arg\max_{g\notin\bar{C}\cup\left[\hat{g}_{1},\ldots,\hat{g}_{j-1}\right]}\left\Vert \mathbf{U}^{H}\mathbf{\bar{a}}_{g}^{C}\right\Vert _{2}$}

7: \qquad{}\qquad{}Tag a new child node with:

\qquad{}\qquad{}\qquad{}\quad{}$C=C\cup\hat{g}_{j}$, $\bar{C}=\bar{C}\cup\left[\hat{g}_{1},\ldots,\hat{g}_{j}\right]$

8: \qquad{}\qquad{}\textbf{if} $\left|C\right|=K$ and $\left\Vert \Pi_{\mathbf{A}_{C}}^{\perp}\mathbf{Y}\right\Vert _{F}<f_{opt}$:

\qquad{}\qquad{}\qquad{}\quad{}Set $S=C$, and $f_{opt}=\left\Vert \Pi_{\mathbf{A}_{C}}^{\perp}\mathbf{Y}\right\Vert _{F}$

9: \ \qquad{}\textbf{end}

10: \textbf{end}

11: Return support $S$ \label{algo1} 
\end{algorithm}

We finally note that nodes at level $K$ need only $d_{K}=1$ branch.
This is because any additional branch would be tagged with provisional
support $C$ that cannot minimize the objective function of problem
(\ref{eq: Pfocus}).

\subsection{Computational Complexity}

Given an $m\times n$ matrix $\mathbf{A}$ and an $m\times l$ matrix%
\footnote{We consider $l\leq m$. When $l>m$, we can substitute $\mathbf{Y}$
with any square root of $\mathbf{YY}^{H}$ (an $m\times m$ matrix)
without changing problem (\ref{eq: Pfocus}).%
} $\mathbf{Y}$,\textbf{ }the computational requirements of MBMP depend
on the specific implementation details, the structure of the measurement
matrix $\mathbf{A}$ and the branch vector $\mathbf{d}=\left[d_{1},\ldots,d_{K}\right]$
(MBMP has $1$ node at level $1$ and ${\textstyle \prod\nolimits _{j<i}}d_{j}$
nodes at level $i$). Due to the variability of the computation costs
of applying the transform $\mathbf{A}^{H}$ (ranging from $\mathcal{O}\left(n\log\left(m\right)l\right)$
for an FFT-type operations to $\mathcal{O}\left(nml\right)$ for unstructured
matrices), we denote with $F$ the computational cost associated with
performing $\mathbf{A}^{H}\mathbf{U}$ without specifying an associated
number of flops. Furthermore, to perform residual refinement, a practical
implementation of MBMP would also need to incorporate an estimate
of the signal subspace, and we denote $R$ the relative cost. For
a node at level $i$, other operations performed by MBMP are: selecting
the $d_{i}$ largest inner products, which is known as the ``selection
problem'' \cite{combOpt} and can be solved using $\mathcal{O}\left(n\right)$
flops; the dictionary refinement, which costs $2m\left(n-i+1\right)$
flops; the update of the projection matrix $\Pi_{\mathbf{A}_{S}}^{\perp}$,
which requires $2m$ flops, and the computation of the residual, that
needs $ml$ flops. An efficient implementation of both the dictionary
refinement and the projection matrix update is obtained by applying
a QR factorization \cite{EldarCS}.

Summarizing, the first node requires $F+R+\mathcal{O}\left(n\right)$
flops, since the dictionary refinement and the projection matrix update
are not performed. Any node at level $i$ (with $2\leq i\leq K$)
requires $F+R+\mathcal{O}\left(n\right)+2m\left(n-i+2\right)+ml$
flops. Finally, a node at level $K+1$ requires $m\left(l+2\right)$
flops to update the projection matrix and to compute the residual
norm.

As a rule of thumb, the complexity of MBMP scales approximately with
the number of nodes in the first $K$ levels of MBMP tree (i.e., all
nodes except those at level $K+1$). Therefore, while RA-ORMP complexity
is proportional to $K$, the complexity of MBMP with branch vector
$\mathbf{d}=\left[d_{1},\ldots,d_{K}\right]$ scales approximately
with $1+\sum_{i=2}^{K}{\textstyle \prod\nolimits _{j<i}}d_{j}$. For
example, the complexity of MBMP with branch vector $\mathbf{d}=\left[2,2,2,2,1\right]$
is approximately $31/5=6.2$ times that of RA-ORMP. This aspect will
be further investigated in the numerical results. It is worth mentioning
that, due to the tree-structure, the MBMP algorithm lends itself to
a parallel implementation. Indeed, if multiple processors are available,
although the total number of MBMP operations remains the same, most
of them can be performed in parallel, reducing the total algorithm's
execution time.

\section{Recovery guarantees for MBMP\label{Sec_MBMP_guaranty}}

In this section, we develop recovery guarantees for MBMP. Throughout
this section, the measurement matrix $\mathbf{A}$ is a given deterministic
matrix. MBMP is executed with a branch vector $\mathbf{d}\triangleq\left[d_{1},\ldots,d_{K}\right]$
of length $K$. The information available to the recovery algorithm
includes $\mathbf{Y}$, $\mathbf{A}$, and $K$. Moreover, as MBMP
is executed, provisional supports, denoted $C_{i}$, are available
at all nodes of level $i$. By convention, $C_{1}=\emptyset$, since
at level $1$, no provisional support is available. Finally, we say
that MBMP succeeds in recovering a $K$-sparse $\mathbf{X}$ if one
node at level $K+1$ is tagged with the correct support of $\mathbf{X}$,
denoted with $S^{*}$, which is assumed to be the (global optimal)
solution of problem (\ref{eq: Pfocus}).

The road map of this section is as follows: We start by reviewing
Tropp's ERC \cite{Troop}. This condition considers signals with a
\textit{specific} support $S^{\ast}$. This restriction enables to
obtain recovery guarantees for pursuit algorithms (e.g. BP, OMP, ORMP,
and RA-ORMP). By generalizing ERC to a multi-branch algorithm, we
formulate in Definition \ref{def: MB-ERC} the MB-ERC. Theorem \ref{th: MBMP recovery}
relies on the MB-ERC to provide a sufficient condition that guarantees
successful recovery with MBMP. Similar to ERC, MB-ERC is non-constructive,
since it focuses only on signals with a specific support $S^{\ast}$.
To overcome this limitation, in Definition \ref{def: d cum cohe cond}
we introduce the MB-coherence condition for multi-branch algorithms.
Using the MB-coherence, Theorem \ref{th: MBMP recovery2} specifies
a sufficient condition that guarantees the recovery of \textit{any}
$K$-sparse signal $\mathbf{X}$ using MBMP. Interestingly, in the
noiseless setup, the MB-coherence condition can be seen as the multi-branch
generalization of the Neuman ERC (or weak ERC) \cite{ERC tikhonov},
which improves upon the cumulative coherence condition proposed in
\cite{Troop}.

\subsection{MB-ERC}

We first overview the ERC, which characterizes the ability of practical
algorithms to recover sparse signals supported on a \textit{specific}\ support
$S^{\ast}$. For a given support $S^{\ast}$ and for a matrix $\mathbf{A}$,
the ERC is formulated \cite{Troop}
\begin{equation}
\max_{g\notin S^{\ast}}\left\Vert \mathbf{A}_{S^{\ast}}^{\dag}\mathbf{a}_{g}\right\Vert _{1}<1.\label{eq: ERC}
\end{equation}

This condition addresses linear systems of equations of the form $\mathbf{A}_{S^{\ast}}\mathbf{x}=\mathbf{a}_{g}$,
where $\mathbf{a}_{g}$ is a column from $\mathbf{A}$ that is outside
the support $S^{\ast}$. The ERC states that the minimum ($\ell_{2}$-)energy
solution to all these systems should have an $\ell_{1}$-length smaller
than $1$. The importance of the ERC stems from its strong connection
to the success of pursuit techniques. In particular, ERC is a sufficient
condition for successful recovery via RA-ORMP (as shown in \cite{Davies10})
and thus for MBMP as well. ERC is also sufficient for correct recovery
via OMP, ORMP and BP in the SMV setup (see \cite{Troop} and \cite{ERC joint ksteps}).

Next, we proceed to introduce MB-ERC, which generalizes ERC to a multi-branch
algorithm and leads to a stronger sufficient condition to guarantee
the success of MBMP. In contrast to RA-ORMP, in which each node has
only one child, the number of children of each node of MBMP is specified
by the branch vector $\mathbf{d}=\left[d_{1},\ldots,d_{K}\right]$,
where $d_{i}$ is the \textit{number of branches} of each node at
level $i$. As a result, MB-ERC is a function of $d_{i}$. To proceed,
it is convenient to define the $d_{i}$\_$\max$ operator. Explicitly,
given a positive integer $d_{i}$ and a real vector $\mathbf{z}$
(where its elements are indexed by $g$), $d_{i}$\_$\max_{g\notin S^{\ast}}\left(\mathbf{z}\right)$
is the $d_{i}$-largest entry among the indices of $\mathbf{z}$ outside
the support $S^{\ast}$. For instance, if $\mathbf{z}=\left[.7,1.4,1.1,.8,.9\right]^{T}$
and $S^{\ast}=\left\{ 2\right\} $, then $1$\_$\max_{g\notin S^{\ast}}\left(\mathbf{z}\right)=1.1$
(the largest entry outside $S^{\ast}$), while $2$\_$\max_{g\notin S^{\ast}}\left(\mathbf{z}\right)=.9$
(the second largest entry outside $S^{\ast}$), and so on.

Now we are ready to define MB-ERC. Consider level $i$ of MBMP. Given
a provisional support $C_{i}$, the dictionary refinement modification
is implemented, and we define $\mathbf{\bar{A}}^{C_{i}}\triangleq\left\{ \mathbf{\bar{a}}_{g}^{C_{i}},g\notin C_{i}\right\} $
as the resulting measurement matrix (see also (\ref{eq: dictionary refinement})).
Denoting $S\triangleq S^{\ast}\setminus C_{i}$ the support's indices
yet to be identified, we consider a sub-matrix of $\mathbf{\bar{A}}^{C_{i}}$
obtained by collecting only atoms $\mathbf{\bar{a}}_{g}^{C_{i}}$
belonging to $S$, i.e., $\mathbf{\bar{A}}_{S}^{C_{i}}\triangleq\left\{ \mathbf{\bar{a}}_{g}^{C_{i}},g\in S\right\} $.
We further define the Out-support In-support energy Ratio (OIR) as
\begin{equation}
\textrm{OIR}\triangleq\frac{\max_{g\notin S}\left\Vert \mathbf{U}^{H}\Pi_{\mathbf{A}_{S}}^{\perp}\mathbf{\bar{a}}_{g}^{C_{i}}\right\Vert _{2}}{\max_{g\in S}\left\Vert \mathbf{U}^{H}\mathbf{\bar{a}}_{g}^{C_{i}}\right\Vert _{2}}.\label{eq: OIR}
\end{equation}
In the SMV setup, we assume by convention that $\mathbf{U=y}$, while,
in the MMV setup, $\mathbf{U}=\operatorname{orth}\left(\Pi_{\mathbf{A}_{C_{i}}}^{\perp}\mathbf{Y}\right)$
is an estimate of the signal subspace given a provisional support
$C_{i}$. OIR is the square-root of the ratio between the largest
energy of $\mathbf{U}^{H}\Pi_{\mathbf{A}_{S}}^{\perp}\mathbf{\bar{a}}_{g}^{C_{i}}$
among indices \textit{outside} $S\triangleq S^{\ast}\setminus C_{i}$
and the largest energy of $\mathbf{U}^{H}\mathbf{\bar{a}}_{g}^{C_{i}}$
over the indices \textit{inside} $S$. Since the definition of the
OIR depends on unknown quantities (e.g., the support $S$), it must
be estimated.

\begin{definition}[MB-ERC]\label{def: MB-ERC}Consider a support
$S^{\ast}$, a matrix $\mathbf{A}$, a positive integer $d_{i}$,
and a correct provisional support $C_{i}\subset S^{\ast}$. Let $S\triangleq S^{\ast}\setminus C_{i}$
be the set of indices yet to be identified. The MB-ERC($S^{*},C_{i},d_{i}$)
is defined as
\begin{equation}
d_{i}\_\max_{g\notin S}\left(\left\Vert \left(\mathbf{\bar{A}}_{S}^{C_{i}}\right)^{\dag}\mathbf{\bar{a}}_{g}^{C_{i}}\right\Vert _{1}\right)<1-\textrm{OIR},\label{eq: Rossi condition}
\end{equation}
where OIR is defined in (\ref{eq: OIR}).

\end{definition}

MB-ERC generalizes ERC to a multi-branch algorithm and to a noisy
setup. In particular, in a noiseless setup ($\textrm{OIR}=0$), MB-ERC($S^{*},\emptyset,1$)
(i.e., (\ref{eq: Rossi condition}) at level $1$ with $d_{1}=1$
branches) reduces to ERC in (\ref{eq: ERC}). By using MB-ERC, we
can guarantee success of MBMP for any signal $\mathbf{X}$ supported
on $S^{\ast}$:

\begin{theorem}[Recovery of any signal supported on $S^{\ast}$]\label{th: MBMP recovery}Let
$\mathbf{X}$ be an unknown, $K$-sparse matrix of rank $r$, with
known support $S^{\ast}$, and $\mathbf{A}$ be full rank with normalized
columns and $\operatorname*{spark}\left(\mathbf{A}\right)>2K-r+1$.
Let $\mathbf{Y}=\mathbf{AX+E}$ be the noisy data with OIR given by
(\ref{eq: OIR}). If the MB-ERC in (\ref{eq: Rossi condition}) is
met for all nodes at levels $i=1,\ldots,K-1$, then MBMP with branch
vector $\mathbf{d}=\left[d_{1},\ldots,d_{K-1},1\right]$ is guaranteed
to recover $\mathbf{X}$ successfully. \end{theorem}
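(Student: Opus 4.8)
The plan is to reduce the assertion ``some node at level $K+1$ carries the support $S^{\ast}$'' to the existence of a single root-to-leaf path along which every provisional support is \emph{correct}, i.e. $C_{i}\subset S^{\ast}$. I would prove this by induction on the level $i$, carrying the invariant that the path-node at level $i$ satisfies $C_{i}\subset S^{\ast}$ with $\left|C_{i}\right|=i-1$, together with the bookkeeping invariant $\bar{C}\cap S^{\ast}=C_{i}$ on the exclusion set of Algorithm \ref{algo1}. The base case is the root, where $C_{1}=\emptyset$ and $\bar{C}=\emptyset$. For the inductive step at a correct node of level $i\leq K-1$, the goal is to show that at least one of the $d_{i}$ children selected in lines 5--7 is again correct; I would then always descend into the \emph{earliest} correct child. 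This choice is what makes the $\bar{C}$ invariant propagate: the siblings preceding the earliest correct child are, by definition, all outside $S^{\ast}$, so the indices they add to $\bar{C}$ are wrong, and the only element of $S^{\ast}$ adjoined is the correct index itself. Consequently $S\triangleq S^{\ast}\setminus C_{i}$ stays disjoint from $\bar{C}$, so none of the still-needed correct atoms is ever removed from the candidate pool, and deleting wrong atoms from the pool can only help the selection.

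The technical heart is the following selection lemma: at a correct node of level $i$, if MB-ERC (\ref{eq: Rossi condition}) holds, then among the $d_{i}$ largest values of $\left\Vert \mathbf{U}^{H}\bar{\mathbf{a}}_{g}^{C_{i}}\right\Vert _{2}$ at least one is attained at an index in $S$. To prove it I would fix a wrong index $g\notin S$ and split $\bar{\mathbf{a}}_{g}^{C_{i}}$ along $\operatorname{span}\bar{\mathbf{A}}_{S}^{C_{i}}$ and its orthogonal complement, $\bar{\mathbf{a}}_{g}^{C_{i}}=\bar{\mathbf{A}}_{S}^{C_{i}}(\bar{\mathbf{A}}_{S}^{C_{i}})^{\dag}\bar{\mathbf{a}}_{g}^{C_{i}}+\Pi_{\bar{\mathbf{A}}_{S}^{C_{i}}}^{\perp}\bar{\mathbf{a}}_{g}^{C_{i}}$. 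Applying $\mathbf{U}^{H}$ and the triangle inequality, the in-span part is controlled by the $\ell_{1}$--$\ell_{2}$ estimate $\left\Vert \mathbf{U}^{H}\bar{\mathbf{A}}_{S}^{C_{i}}\mathbf{w}\right\Vert _{2}\leq\left\Vert \mathbf{w}\right\Vert _{1}\max_{h\in S}\left\Vert \mathbf{U}^{H}\bar{\mathbf{a}}_{h}^{C_{i}}\right\Vert _{2}$ with $\mathbf{w}=(\bar{\mathbf{A}}_{S}^{C_{i}})^{\dag}\bar{\mathbf{a}}_{g}^{C_{i}}$, while the out-of-span part is bounded by the numerator appearing in the OIR (\ref{eq: OIR}), i.e. by $\textrm{OIR}\cdot M$, where $M\triangleq\max_{h\in S}\left\Vert \mathbf{U}^{H}\bar{\mathbf{a}}_{h}^{C_{i}}\right\Vert _{2}$ (and $M>0$ because the full-rank/spark hypotheses keep the refined support atoms nonzero). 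This gives the master bound $\left\Vert \mathbf{U}^{H}\bar{\mathbf{a}}_{g}^{C_{i}}\right\Vert _{2}\leq(\left\Vert (\bar{\mathbf{A}}_{S}^{C_{i}})^{\dag}\bar{\mathbf{a}}_{g}^{C_{i}}\right\Vert _{1}+\textrm{OIR})\,M$ for every $g\notin S$.

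With the master bound in hand the counting argument closes the lemma: MB-ERC states that the $d_{i}$-th largest value of $\left\Vert (\bar{\mathbf{A}}_{S}^{C_{i}})^{\dag}\bar{\mathbf{a}}_{g}^{C_{i}}\right\Vert _{1}$ over $g\notin S$ is below $1-\textrm{OIR}$, so all but at most $d_{i}-1$ wrong indices have this $\ell_{1}$ quantity strictly below $1-\textrm{OIR}$; for each such index the master bound yields $\left\Vert \mathbf{U}^{H}\bar{\mathbf{a}}_{g}^{C_{i}}\right\Vert _{2}<M$. Hence at most $d_{i}-1$ wrong indices attain a score at least $M$, forcing the best correct index into the top $d_{i}$ and proving the lemma; combined with the earliest-correct-child rule this is the inductive step. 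Iterating to level $K$ yields a node with $C_{K}\subset S^{\ast}$, $\left|C_{K}\right|=K-1$, leaving a single missing index $s$. The last selection ($d_{K}=1$) I would handle by rank-awareness rather than by the lemma: in the noiseless regime the residual $\Pi_{\mathbf{A}_{C_{K}}}^{\perp}\mathbf{Y}$ collapses onto $\operatorname{span}\bar{\mathbf{a}}_{s}^{C_{K}}$, so $\textrm{OIR}=0$ and the maximizer is forced to $s$, giving $C_{K+1}=S^{\ast}$; since $S^{\ast}$ is assumed to be the global optimum of (\ref{eq: Pfocus}), line 8 then returns it.

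I expect the master bound to be the principal obstacle, for two reasons. First, the out-of-span term must be matched precisely to the OIR numerator, which requires reading the projector in (\ref{eq: OIR}) as the projection off $\operatorname{span}\bar{\mathbf{A}}_{S}^{C_{i}}$ and the subspace estimate $\mathbf{U}$ as the one attached to $C_{i}$; pinning the additive constant to exactly $\textrm{OIR}$ (rather than a looser bound) is what keeps MB-ERC tight enough to beat the plain ERC. Second, the final level-$K$ step is conceptually delicate: the rank-awareness collapse is clean only when $\textrm{OIR}=0$, so the noisy statement leans on the OIR being the correct accounting of the perturbation at that node. I would therefore treat this singleton case separately, since folding it into the selection lemma would spuriously demand an MB-ERC hypothesis at level $K$ that the theorem deliberately omits.
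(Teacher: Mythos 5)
Your selection lemma and the induction that carries it are, in substance, the paper's own proof. The paper likewise splits $\mathbf{U}^{H}\mathbf{\bar{a}}_{g}^{C_{i}}$ along $\operatorname{span}\mathbf{\bar{A}}_{S}^{C_{i}}$ and its orthogonal complement, bounds the in-span part by $d_{i}\_\max_{g\notin S}\bigl(\bigl\Vert (\mathbf{\bar{A}}_{S}^{C_{i}})^{\dag}\mathbf{\bar{a}}_{g}^{C_{i}}\bigr\Vert _{1}\bigr)\cdot\max_{h\in S}\bigl\Vert \mathbf{U}^{H}\mathbf{\bar{a}}_{h}^{C_{i}}\bigr\Vert _{2}$, matches the orthogonal part to the OIR numerator, and concludes by the same counting argument that at most $d_{i}-1$ out-of-support indices can outscore the best in-support index (cf. (\ref{eq: rho})--(\ref{eq: pippo3})). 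Your reading of the projector in (\ref{eq: OIR}) as acting off the span of the \emph{refined} atoms is exactly what the paper's chain of inequalities needs to be consistent, and your invariant $\bar{C}\cap S^{\ast}=C_{i}$ with the earliest-correct-child rule is actually more careful than the paper, which asserts the ``chain of correct decisions'' without ever discussing the exclusion set of Algorithm \ref{algo1}.

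The genuine gap is the level-$K$ step, which you flag but do not close. Your argument there (the residual collapses onto $\operatorname{span}\mathbf{\bar{a}}_{s}^{C_{K}}$, forcing the maximizer to $s$) is valid only when $\mathbf{E}=\mathbf{0}$, whereas the theorem is stated for noisy data. The paper closes this step by a different mechanism that needs no hypothesis at level $K$ at all: at a node with correct $C_{K}$, $\left|C_{K}\right|=K-1$ and $d_{K}=1$, choosing $g$ to maximize the correlation of the normalized refined atom with the residual is the same as choosing $g$ to minimize $\bigl\Vert \Pi_{\mathbf{A}_{C_{K}\cup g}}^{\perp}\mathbf{Y}\bigr\Vert _{F}$, because $\bigl\Vert \Pi_{\mathbf{A}_{C_{K}\cup g}}^{\perp}\mathbf{Y}\bigr\Vert _{F}^{2}=\bigl\Vert \Pi_{\mathbf{A}_{C_{K}}}^{\perp}\mathbf{Y}\bigr\Vert _{F}^{2}-\bigl\Vert (\mathbf{\bar{a}}_{g}^{C_{K}})^{H}\Pi_{\mathbf{A}_{C_{K}}}^{\perp}\mathbf{Y}\bigr\Vert _{2}^{2}$; and since the setup assumes $S^{\ast}$ is \emph{the} global optimal solution of (\ref{eq: Pfocus}), the minimizing $g$ must be the missing index $s$ --- any other completion yields a size-$K$ support whose residual cannot beat that of $S^{\ast}$. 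The noise is thus absorbed entirely into the standing global-optimality assumption on $S^{\ast}$, with no OIR accounting required at that node; without importing this argument, your proof establishes only the noiseless version of the theorem. (One caveat that applies to the paper's own closing step as well: for $l>1$ the algorithm scores atoms against the orthonormalized $\mathbf{U}$ rather than against the raw residual, so the equivalence with residual minimization is exact in the SMV case, while in the MMV case one needs precisely your rank-one collapse observation, or an additional argument, to make it literal.)
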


\begin{proof} See Appendix A. \end{proof}

Theorem \ref{th: MBMP recovery} formulates a sufficient condition
for MBMP successful recovery of sparse signals supported on a \textit{specific}
support $S^{\ast}$. In the next subsection, by removing the knowledge
of $S^{\ast}$, we obtain a condition that guarantees MBMP successful
recovery for \textit{any} $K$-sparse signal.

\subsection{MB-coherence condition}

A disadvantage of both MB-ERC and ERC is that they require the knowledge
of the\ true support $S^{\ast}$, hardly available in practice. This
implies that to check if a measurement matrix $\mathbf{A}$ satisfies
MB-ERC (or ERC), one has to compute the conditions for all $\binom{n}{K}$
possible supports $S^{\ast}$ of cardinality $K$, which is usually
prohibitive even for small values of $K$. To overcome this limitation,
we develop a practical condition that guarantees recovery via MBMP
for \textit{any} $K$-sparse signal $\mathbf{X}$. The main problem
with MB-ERC and ERC is the presence of the pseudo-inverse. As shown
in \cite{Troop}, by using standard norm inequalities to upper bound
ERC, it is possible to obtain practical conditions that include only
inner products rather than the pseudo-inverse operator. These conditions
rely on the notion of \textit{coherence} of a measurement matrix $\mathbf{A}$,
defined as $\mu\left(\mathbf{A}\right)\triangleq\max_{i\neq j}\left\vert \mathbf{a}_{i}^{H}\mathbf{a}_{j}\right\vert $
\cite{Troop}, and on the notion of cumulative coherence (also known
as \textit{Babel's function }\cite{Elad}), defined as $\bar{\mu}\left(K,\mathbf{A}\right)\triangleq\max_{S,\left\vert S\right\vert =K}\max_{g\notin S}\left\Vert \mathbf{A}_{S}^{H}\mathbf{a}_{g}\right\Vert _{1}$
\cite{Troop}. Using these definitions, it was shown in \cite{Troop}
that the ERC holds for any $K$-sparse signal $\mathbf{X}$, if either
the coherence satisfies
\begin{equation}
\mu\left(\mathbf{A}\right)<\frac{1}{2K-1}\label{eq: cohe}
\end{equation}
or if the cumulative coherence satisfies
\begin{equation}
\bar{\mu}\left(K-1,\mathbf{A}\right)+\bar{\mu}\left(K,\mathbf{A}\right)<1.\label{eq: cum cohe}
\end{equation}

A condition that requires fewer measurements is called Neuman ERC
(or weak ERC). It was proposed in \cite{ERC tikhonov}, and can be
stated as:
\begin{equation}
\max_{S,\left\vert S\right\vert =K}\left(\max_{g\in S}\left\Vert \mathbf{A}_{S}^{H}\mathbf{a}_{g}\right\Vert _{1}+\max_{g\notin S}\left\Vert \mathbf{A}_{S}^{H}\mathbf{a}_{g}\right\Vert _{1}\right)<2\left(1-\textrm{NSR}\right),\label{eq: condition 1 cohe}
\end{equation}
where the Noise-to-Signal Ratio (NSR) is defined in \cite{ERC tikhonov}.
Similarly to the OIR, NSR depends on unknown quantities (e.g., signal
and noise realizations) and must be estimated. As shown in \cite{ERC tikhonov},
condition (\ref{eq: condition 1 cohe}) may be used to guarantee correct
recovery of any $K$-sparse signal using BP.

The number of measurements required to guarantee correct recovery
can be further reduced by capturing the multi-branch structure of
MBMP. Indeed, we now develop a condition, dubbed MB-coherence, which
guarantees recovery of any $K$-sparse signal using MBMP, while requiring
less measurements than (\ref{eq: condition 1 cohe}) for a multi-branch
algorithm. Considering a provisional support $C_{i}$, as before,
we denote $\mathbf{\bar{A}}^{C_{i}}\triangleq\left\{ \mathbf{\bar{a}}_{g}^{C_{i}},g\notin C_{i}\right\} $
the associated refined measurement matrix. For the sake of notation,
in the definition, we drop the superscript $C_{i}$ from $\mathbf{\bar{A}}_{S}^{C_{i}}$
and we use $\mathbf{\bar{A}}_{S}$ instead.

\begin{definition}[MB-coherence]\label{def: d cum cohe cond}Consider
a matrix $\mathbf{A}$, integers $K$ and $d_{i}$, a provisional
support $C_{i}$, and OIR defined in (\ref{eq: OIR}), with $\textrm{OIR}<1$.
Let $k\triangleq K-\left\vert C_{i}\right\vert $. The MB-coherence($C_{i},d_{i}$)
is defined as
\begin{equation}
\max_{S,\left\vert S\right\vert =k}\left(\max_{g\in S}\left\Vert \mathbf{\bar{A}}_{S}^{H}\mathbf{\bar{a}}_{g}^{C_{i}}\right\Vert _{1}+\frac{d_{i}\text{\_}\max\limits _{g\in S}\left(\left\Vert \mathbf{\bar{A}}_{S}^{H}\mathbf{\bar{a}}_{g}^{C_{i}}\right\Vert _{1}\right)}{1-\textrm{OIR}}\right)<2.\label{eq: condition d cohe}
\end{equation}

\end{definition}

A key aspect of the MB-coherence condition is that it includes only
inner products among columns of the matrix $\mathbf{\bar{A}}^{C_{i}}$
(as opposed to MB-ERC in (\ref{eq: Rossi condition}) which incorporates
the pseudo-inverse operator). This enables to practically compute
the smallest integer $d_{i}$ such that the MB-coherence condition
(\ref{eq: condition d cohe}) is met, as discussed in Appendix C.

By using the MB-coherence condition, it is possible to obtain a sufficient
condition to guarantee that MBMP recovers any $K$-sparse signal $\mathbf{X}$:

\begin{theorem}[Recovery of \textit{any} $K$-sparse signal]\label{th: MBMP recovery2}Let
$\mathbf{X}$ be an unknown, $K$-sparse matrix of rank $r$, and
$\mathbf{A}$ be full rank with normalized columns and $\operatorname*{spark}\left(\mathbf{A}\right)>2K-r+1$.
Let $\mathbf{Y}=\mathbf{AX+E}$ be the noisy data with OIR given by
(\ref{eq: OIR}). If the MB-coherence condition in (\ref{eq: condition d cohe})
is met for all nodes at levels $i=1,\ldots,K-1$, then MBMP with branch
vector $\mathbf{d}=\left[d_{1},\ldots,d_{K-1},1\right]$ is guaranteed
to recover $\mathbf{X}$ successfully. \end{theorem}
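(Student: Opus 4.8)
The plan is to prove Theorem~\ref{th: MBMP recovery2} by reduction to Theorem~\ref{th: MBMP recovery}: I will show that, at every node, the MB-coherence condition (\ref{eq: condition d cohe}) implies the MB-ERC (\ref{eq: Rossi condition}). Since MB-coherence($C_i,d_i$) is obtained by maximizing over all index sets of cardinality $k=K-\left|C_i\right|$, it does not reference the true support; in particular, for an arbitrary $K$-sparse $\mathbf{X}$ with support $S^{*}$ and for any correct provisional support $C_i\subset S^{*}$, the set $S\triangleq S^{*}\setminus C_i$ (of cardinality $k$) is one of the sets over which the maximum in (\ref{eq: condition d cohe}) is taken. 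Hence (\ref{eq: condition d cohe}) instantiated at this $S$ yields a bound stated purely in terms of inner products of the refined atoms $\mathbf{\bar a}_{g}^{C_i}$. If this bound forces the MB-ERC (\ref{eq: Rossi condition}) to hold at every correct node of levels $i=1,\ldots,K-1$, then Theorem~\ref{th: MBMP recovery} applies verbatim and guarantees that MBMP recovers $\mathbf{X}$.

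The technical core is to remove the pseudo-inverse from the left-hand side of (\ref{eq: Rossi condition}) and replace it by the inner-product quantities of (\ref{eq: condition d cohe}), following Tropp's argument. Writing $\mathbf{\bar A}_{S}\triangleq\mathbf{\bar A}_{S}^{C_i}$ and using $\mathbf{\bar A}_{S}^{\dag}=(\mathbf{\bar A}_{S}^{H}\mathbf{\bar A}_{S})^{-1}\mathbf{\bar A}_{S}^{H}$ together with submultiplicativity of the $\ell_{1}$-induced norm, I would bound, for each $g\notin S$,
\begin{equation}
\left\Vert \mathbf{\bar A}_{S}^{\dag}\mathbf{\bar a}_{g}^{C_i}\right\Vert _{1}\leq\left\Vert (\mathbf{\bar A}_{S}^{H}\mathbf{\bar A}_{S})^{-1}\right\Vert _{1}\left\Vert \mathbf{\bar A}_{S}^{H}\mathbf{\bar a}_{g}^{C_i}\right\Vert _{1}. \nonumber
\end{equation}
Because the refined atoms are normalized (see (\ref{eq: dictionary refinement})), the Gram matrix $\mathbf{\bar A}_{S}^{H}\mathbf{\bar A}_{S}$ has unit diagonal, so a Neumann-series expansion of its inverse gives $\left\Vert (\mathbf{\bar A}_{S}^{H}\mathbf{\bar A}_{S})^{-1}\right\Vert _{1}\leq 1/(2-\alpha)$, where $\alpha\triangleq\max_{g\in S}\left\Vert \mathbf{\bar A}_{S}^{H}\mathbf{\bar a}_{g}^{C_i}\right\Vert _{1}$ is precisely the first term of (\ref{eq: condition d cohe}). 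The full-rank/spark hypothesis makes $\mathbf{\bar A}_{S}$ full column rank, so the inverse exists, and (\ref{eq: condition d cohe}) (whose branch term is nonnegative, whose denominator is positive since $\textrm{OIR}<1$, and whose right-hand side is $2$) forces $\alpha<2$, guaranteeing convergence of the Neumann series.

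It then remains to apply the $d_i\text{\_}\max$ operator. Since the scalar $1/(2-\alpha)$ is fixed and the map $x\mapsto x/(2-\alpha)$ is increasing, pointwise domination is preserved by order statistics, so the $d_i$-th largest value over $g\notin S$ of $\left\Vert \mathbf{\bar A}_{S}^{\dag}\mathbf{\bar a}_{g}^{C_i}\right\Vert _{1}$ is at most $1/(2-\alpha)$ times the $d_i$-th largest value of $\left\Vert \mathbf{\bar A}_{S}^{H}\mathbf{\bar a}_{g}^{C_i}\right\Vert _{1}$, i.e. the branch-dependent $d_i\text{\_}\max$ term of (\ref{eq: condition d cohe}). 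Rearranging (\ref{eq: condition d cohe}) as an upper bound $(2-\alpha)(1-\textrm{OIR})$ on that term and combining with the previous inequality yields exactly $d_i\text{\_}\max_{g\notin S}(\left\Vert \mathbf{\bar A}_{S}^{\dag}\mathbf{\bar a}_{g}^{C_i}\right\Vert _{1})<1-\textrm{OIR}$, which is the MB-ERC (\ref{eq: Rossi condition}). Applying this at every correct node and invoking Theorem~\ref{th: MBMP recovery} completes the proof.

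The main obstacle I anticipate is the Tropp--Neumann step: establishing the operator-norm bound $\left\Vert (\mathbf{\bar A}_{S}^{H}\mathbf{\bar A}_{S})^{-1}\right\Vert _{1}\leq 1/(2-\alpha)$ and verifying that the quantity $\alpha$ controlling it coincides with the first term of (\ref{eq: condition d cohe}), so that the two terms of the MB-coherence condition reassemble precisely into the threshold $1-\textrm{OIR}$ of the MB-ERC. A secondary point requiring care is the order-statistic bookkeeping, ensuring that a pointwise bound on each out-of-support correlation licenses the corresponding bound on the $d_i$-th largest one, and that the per-support instantiation of the worst-case condition (\ref{eq: condition d cohe}) is valid for every support $S^{*}$ compatible with the node.
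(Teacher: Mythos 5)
Your proposal is correct and takes essentially the same route as the paper's proof: show that MB-coherence($C_i,d_i$) implies MB-ERC($S^{*},C_i,d_i$) at every correct node, via the Tropp-type bound $\left\Vert \mathbf{\bar A}_{S}^{\dag}\mathbf{\bar a}_{g}^{C_{i}}\right\Vert _{1}\leq\left\Vert (\mathbf{\bar A}_{S}^{H}\mathbf{\bar A}_{S})^{-1}\right\Vert _{1}\left\Vert \mathbf{\bar A}_{S}^{H}\mathbf{\bar a}_{g}^{C_{i}}\right\Vert _{1}$ and the Neumann-series estimate $\left\Vert (\mathbf{\bar A}_{S}^{H}\mathbf{\bar A}_{S})^{-1}\right\Vert _{1}\leq 1/(2-\alpha)$ with $\alpha=\max_{g\in S}\left\Vert \mathbf{\bar A}_{S}^{H}\mathbf{\bar a}_{g}^{C_{i}}\right\Vert _{1}$, and then invoke Theorem \ref{th: MBMP recovery}. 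The only difference is one of detail: you explicitly work out the submultiplicativity step, the convergence condition $\alpha<2$, and the order-statistic bookkeeping for the $d_i\text{\_}\max$ operator, which the paper compresses into a single displayed inequality justified by a citation to \cite{Troop}.
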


\begin{proof} See Appendix B. \end{proof}

Theorem \ref{th: MBMP recovery2} guarantees correct recovery of any
$K$-sparse signal using MBMP. Furthermore, in a noiseless case (when
OIR$=$NSR$=0$), MB-coherence($\emptyset,1$) (i.e., (\ref{eq: condition d cohe})
at level $1$ with $d_{1}=1$ branches) reduces to (\ref{eq: condition 1 cohe}).
Since the $d\text{\_}\max$ operator is decreasing in $d$, MB-coherence($\emptyset,d_{1}$)
with $d_{1}>1$ guarantee MBMP success for dictionaries that do not
satisfy Neuman ERC. In the numerical results section, this point with
be further explored.

\subsection{Discussion}

A key aspect highlighted by the theoretical results above is that
increasing the number of branches of MBMP does not only allow us to
reduce the number of measurements, but it enables to tolerate higher
noise levels. In particular, consider MB-ERC in (\ref{eq: Rossi condition})
(MB-coherence in (\ref{eq: condition d cohe})). Since the $d_{i}\text{\_}\max$
operator is decreasing in $d_{i}$, one can preserve the validity
of MB-ERC (MB-coherence) even if the noise level increase (i.e., larger
OIR) by increasing $d_{i}$. This point will be further analyzed in
the numerical results section.

Additionally, Theorem \ref{th: MBMP recovery} (Theorem \ref{th: MBMP recovery2})
reads as the intersection of the conditions MB-ERC($S^{\ast},C_{i},d_{i}$)
(MB-coherence($C_{i},d_{i}$)) for all nodes of the MBMP tree at levels
$i=1,\ldots,K-1$. These requirements can be considerably simplified
in two situations. According to \cite[Lemma 2]{ERC joint ksteps},
MB-ERC($S^{\ast},\hat{C},1$) implies MB-ERC($S^{\ast},C,1$) whenever
$\hat{C}\subset C\subset S^{*}$. For example, MB-ERC($S^{\ast},\emptyset,1$)
implies MB-ERC($S^{\ast},C,1$) for any $C\subset S^{*}$. More generally,
it can be shown that MB-ERC($S^{\ast},\hat{C},\hat{d}$) (MB-coherence($\hat{C},\hat{d}$))
implies MB-ERC($S^{\ast},C,d$) ((MB-coherence($C,d$))) whenever
$\hat{C}\subset C$ and $\hat{d}\leq d$. Let a node be tagged with
support $\hat{C}$, the condition $\hat{C}\subset C$ is satisfied
for any support $C$ of a descendant of such node (i.e., children,
children of children, etc.). This implies that Theorem \ref{th: MBMP recovery}
(Theorem \ref{th: MBMP recovery2}) requires MB-ERC (MB-coherence)
only at level $1$ (root node) and at nodes with a smaller number
of branches than their parents. As a concrete example, if $d_{i}=d_{1}$
for $i=1,\ldots,K-1$, Theorem \ref{th: MBMP recovery} (Theorem \ref{th: MBMP recovery2})
requires only MB-ERC($S^{*},\emptyset,d_{1}$) (MB-coherence($\emptyset,d_{1}$))
(thus requiring a similar complexity as Neuman ERC). Equivalently,
for MBMP with branch vector $\mathbf{d}=\left[d_{1},1,\ldots,1\right]$,
Theorem \ref{th: MBMP recovery} (Theorem \ref{th: MBMP recovery2})
requires MB-ERC (MB-coherence) conditions only for nodes at level
$1$ and $2$, for a total of $d_{1}+1$ conditions to be checked.
Another situation where we can simplify these conditions is in a noiseless
setup when $\operatorname{rank}\left(\mathbf{X}\right)>1$. In this
scenario, Theorem \ref{th: MBMP recovery} (Theorem \ref{th: MBMP recovery2})
requires MB-ERC (MB-coherence) only at level $i$ with $1\leq i\leq K-\operatorname{rank}\left(\mathbf{X}\right)$,
since at level $i>K-\operatorname{rank}\left(\mathbf{X}\right)$,
MBMP is guaranteed to take correct decisions thanks to the rank aware
property.

Given a matrix $\mathbf{A}$, we would like to design the number of
branches of MBMP to guarantee recovery of any $K$-sparse signals
for some targeted sparsity level $K$. An application of Theorem \ref{th: MBMP recovery2}
is to provide an upper bound on the number of branches needed by each
node of MBMP. Consider level $1$ of MBMP. By choosing $d_{1}$ as
the smallest integer such that (\ref{eq: condition d cohe}) holds
at level $1$, we guarantee that at least one node at level $2$ has
a support $C_{2}$ such that $C_{2}\subset S^{\ast}$. In general,
for each node at level $i$, we compute the refined measurement matrix
$\mathbf{\bar{A}}^{C_{i}}\triangleq\left\{ \mathbf{\bar{a}}_{g}^{C_{i}},g\notin C_{i}\right\} $,
and we select $d_{i}$ to satisfy (\ref{eq: condition d cohe}) at
level $i$. The process continues until $d_{K-1}$ is set at level
$K-1$, since nodes at level $K$ need only $d_{K}=1$ branch. Moreover,
from the discussion above, if at some node, (\ref{eq: condition d cohe})
holds with a given $d_{i}$, then, at any children of such node, the
number of branches $d_{j}$ needed to met (\ref{eq: condition d cohe})
obeys $d_{j}\leq d_{i}$. This implies that, if at some node (\ref{eq: condition d cohe})
holds with $d_{i}=1$, we can set $d_{j}=1$ branch for all children
of such node without requiring additional conditions.

\section{Numerical Results\label{Sec_Numerical}}

In this section, we present numerical results to illustrate the guarantees
obtained in Section \ref{Sec_MBMP_guaranty} and to investigate the
performance of the proposed MBMP algorithm. Although MBMP may solve
the problem (\ref{eq: Pfocus}) for any type of measurement matrix
$\mathbf{A}$, in this section we apply MBMP to perform direction-of-arrival
(DOA) estimation in a MIMO radar system where spatial compressive
sensing \cite{Rossi13} is employed. We start by introducing the MIMO
radar spatial compressive sensing setup.
\begin{figure}[ptb]
\centering{}\includegraphics[width=3.4402in,height=1.9052in]{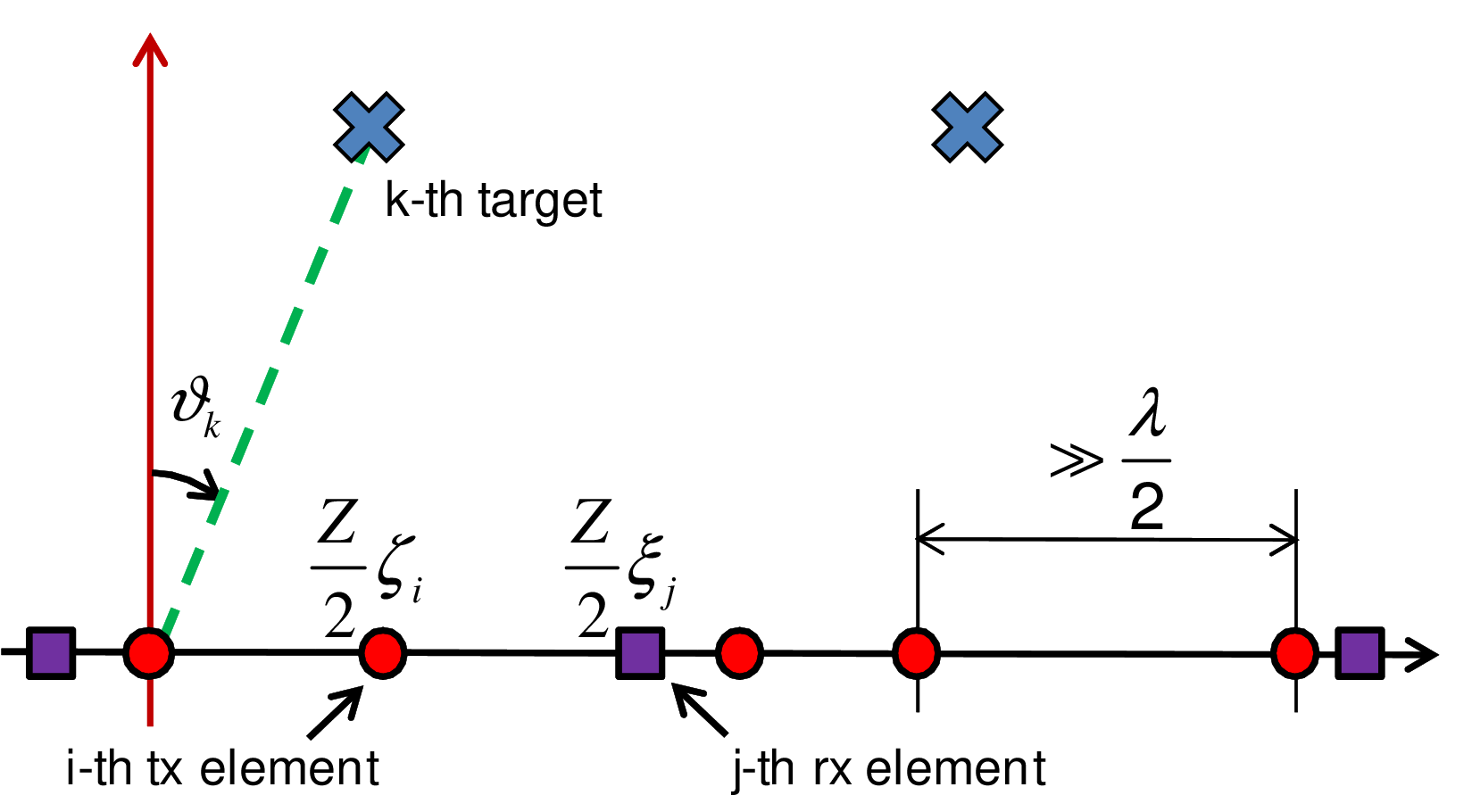}\caption{MIMO radar system model.}
\label{fig: system model}
\end{figure}

\subsection{MIMO radar setup}

We model a MIMO radar system (see Fig. \ref{fig: system model}),
where $N$ sensors collect a finite train of $l$ pulses. Each pulse
consists of $M$ orthogonal spread spectrum waveforms of length $M$
chips. Each one of the waveforms is sent by one of\ the $M$ transmitters
and returned from $K$ stationary targets. We assume that transmitters
and receivers form (possibly overlapping) linear arrays of equal aperture
$Z/2$, respectively ($Z$ is normalized in wavelength units): the
$i$-th transmitter is at position $Z\xi_{i}/2$, where $\xi_{i}\in\left[-0.5,0.5\right]$
for $i=1,\ldots,M$ on the $x$-axis; the $j$-th receiver is at position
$Z\zeta_{j}/2$, where $\zeta_{j}\in\left[-0.5,0.5\right]$ for $j=1,\ldots,N$.
The targets' positions are assumed constant over the observation interval
of $l$ pulses.

The purpose of the system is to determine the DOA angles to targets
of interest, which translate to recover the unknown signal support.
We consider targets associated with a particular range and Doppler
bin. Targets in adjacent range-Doppler bins contribute interference
to the bin of interest. The assumption of a common range bin implies
that all waveforms are received with the same time delay after transmission.
Targets are assumed in the far-field, meaning that a target's DOA
parameter $\theta\triangleq\sin\vartheta$ (where $\vartheta$ is
the DOA angle) is constant across the array. Following \cite{Rossi13},
the DOA estimation problem can be cast within a sparse localization
framework. Neglecting the discretization error, it is assumed that
the target possible locations comply with a grid of $n$ points $\phi_{1:n}$
(with $n\gg K$). By defining the $MN\times n$ matrix
\begin{equation}
\mathbf{A}\triangleq\left[\mathbf{a}\left(\phi_{1}\right),\ldots,\mathbf{a}\left(\phi_{n}\right)\right]\label{eq: MIMOradarA}
\end{equation}
where $\mathbf{a}\left(\theta\right)\triangleq\mathbf{c}\left(\theta\right)\otimes\mathbf{b}\left(\theta\right)$
with $\mathbf{b}\left(\theta\right)=\left[\exp\left(j2\pi Z\theta\zeta_{1}\right),\ldots,\exp\left(j2\pi Z\theta\zeta_{N}\right)\right]^{T}$
the receiver steering vector and $\mathbf{c}\left(\theta\right)=\left[\exp\left(j2\pi Z\theta\xi_{1}\right),\ldots,\exp\left(j2\pi Z\theta\xi_{M}\right)\right]^{T}$
the transmitter steering vector, the signal model is expressed as
(\ref{eq: Ynoisy}). In particular, the unknown matrix $\mathbf{X}\in\mathbb{C}^{n\times l}$
contains the targets locations and gains. The support of $\mathbf{X}$
corresponds to grid points with a target (see \cite{Rossi13} for
further details).

Spatial compressive sensing assumes that the elements' positions are
random variables (described by the probability density functions (pdf)
$p\left(\xi\right)$ and $p\left(\zeta\right)$). Following the setup
discussed in \cite{Rossi13}, we chose $p\left(\xi\right)$ and $p\left(\zeta\right)$
as uniform distributions, and $\phi_{1:n}$ as a uniform grid of $2/Z$-spaced
points in the range $\left[-1,1\right]$. This implies that the number
of grid points is $n=Z+1$ (columns of the measurement matrix $\mathbf{A}$).

In this section, the target gains are given by $x_{k,p}=\exp\left(-j\varphi_{k,p}\right)$,
with $\varphi_{k,p}$ drawn i.i.d., uniform over $\left[0,2\pi\right)$,
for all $k=1,...,K$ (where $K$ is the number of targets) and $p=1,\ldots,l$
(where $l$ is the number of snapshots). The noise (see (\ref{eq: Ynoisy}))
is assumed to be distributed as $\operatorname{vec}\left(\mathbf{E}\right)\sim\mathcal{CN}\left(\mathbf{0},\sigma^{2}\mathbf{I}\right)$
(where $\operatorname{vec}\left(\mathbf{\cdot}\right)$ is the vectorization
operator) and the SNR is defined as $10\log_{10}\left(\min_{k,p}\left|x_{k,p}\right|^{2}\right)-10\log_{10}\left(\sigma^{2}\right)$,
which in our setup reduces to $-10\log_{10}\left(\sigma^{2}\right)$,
since $\left|x_{k,p}\right|=1$ $\forall k,p$. From the definition
of the measurement matrix $\mathbf{A}$, its columns all have norms
equal to $\sqrt{MN}$. Throughout the numerical results, the columns
of $\mathbf{A}$ are normalized to unit norm.

\subsection{Numerical experiments}

We start by exploring the guarantee obtained in Section \ref{Sec_MBMP_guaranty},
using the MB-coherence. We investigate numerically the trade-off between
the number of measurements and number of branches $d_{1}$ at level
$1$ of MBMP (which relates to the algorithm's complexity) in order
to meet the MB-coherence($\emptyset,d_{1}$) condition (\ref{eq: condition d cohe})
at level $1$ in a noiseless setup ($\textrm{OIR}=0$), i.e.,
\begin{equation}
\max_{S,\left\vert S\right\vert =K}\left(\max_{g\in S}\left\Vert \mathbf{A}_{S}^{H}\mathbf{a}_{g}\right\Vert _{1}+d_{1}\text{\_}\max_{g\notin S}\left(\left\Vert \mathbf{A}_{S}^{H}\mathbf{a}_{g}\right\Vert _{1}\right)\right)<2,\label{eq: d cum cohe level1}
\end{equation}
where $\mathbf{\bar{A}}^{C_{1}}=\mathbf{A}$ since $C_{1}=\emptyset$.
As discussed in Section \ref{Sec_MBMP_guaranty}, condition (\ref{eq: condition d cohe})
is sufficient to guarantee the correct recovery of any $K$-sparse
signal $\mathbf{X}$ with rank $r$ using MBMP with branch vector
$\mathbf{d}\triangleq\left[d_{1},\ldots,d_{K-r},1,\ldots,1\right]$,
where $d_{i}=d_{1}$ for $i=2,\ldots,K-r$.

We generate several realizations of the MIMO radar measurement matrix
$\mathbf{A}\in\mathbb{C}^{MN\times n}$ (as defined in (\ref{eq: MIMOradarA})),
and for each realization we test whether (\ref{eq: d cum cohe level1})
holds, the probabilities of meeting the coherence condition in (\ref{eq: cohe})
and the cumulative coherence condition in (\ref{eq: cum cohe}) are
also plot as references (notice that the case $d_{1}=1$ reduces to
the Neuman ERC). Fig. \ref{fig: MBCoherenceMIMO} plots the probability
of meeting condition (\ref{eq: d cum cohe level1}) as a function
of the number of measurements $MN$ and parametrized by the number
of branches $d_{1}$. We set $K=3$, $Z=500$, and we chose $p\left(\xi\right)$
and $p\left(\zeta\right)$ as uniform distributions, and $\phi_{1:n}$
as a uniform grid of $2/Z$-spaced points in the range $\left[-1,1\right]$.
This implies that the number of grid points is $n=Z+1=501$. The main
insight of the figure is that fewer measurements are needed by the
proposed MB-coherence($\emptyset,d_{1}$) with $d_{1}>1$ compared
to previous conditions (Neuman ERC (i.e., MB-coherence($\emptyset,d_{1}$)
with $d_{1}=1$), coherence and cumulative coherence). For instance,
while Neuman ERC needs about $MN=400$ to guarantee recovery with
probability $.95$, the proposed MB-coherence($\emptyset,2$) is met
with probability $.95$ for $MN=324$ (i.e., $M=N=18$ elements).
Furthermore, as the number of branches $d_{1}$ of MBMP is increased,
fewer number of measurements is needed to guarantee recovery. For
example, only $MN=289$ measurements (i.e., $M=N=17$ elements) are
needed to guarantee MB-coherence($\emptyset,3$) with probability
$.95$, saving $6$ antenna elements with respect to the $d_{1}=1$
setup.
\begin{figure}[ptb]
\centering{}\includegraphics[width=3.4411in,height=2.6515in]{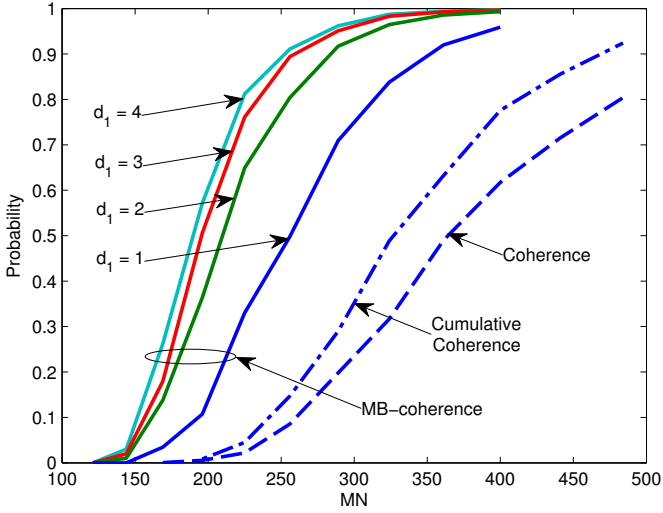}\caption{Probability of meeting condition (\ref{eq: d cum cohe level1}) as
a function of the number of measurements $MN$ for different value
of $d_{1}$. The MIMO radar measurement matrix $\mathbf{A}\in\mathbb{C}^{MN\times n}$
defined in (\ref{eq: MIMOradarA}) is employed. Signal sparsity is
$K=3$ and $n=501$.}
\label{fig: MBCoherenceMIMO}
\end{figure}

\begin{figure}[ptb]
\centering{}\includegraphics[width=3.4411in,height=2.6515in]{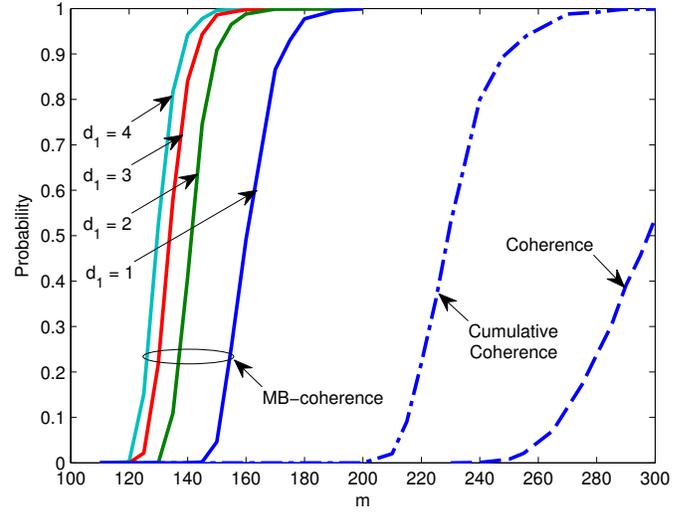}\caption{Probability of meeting condition (\ref{eq: d cum cohe level1}) as
a function of the number of measurements $m$ for different value
of $d_{1}$. The complex Gaussian measurement matrix $\mathbf{A}\in\mathbb{C}^{m\times n}$
is employed. Signal sparsity is $K=3$ and $n=501$.}
\label{fig: MBCoherenceGauss}
\end{figure}

In addition to the MIMO radar measurement matrix, we also investigate
a Gaussian measurement matrix, which has been widely studied in compressive
sensing \cite{EldarCS}. The matrix $\mathbf{A}\in\mathbb{C}^{m\times n}$
is formed by generating $mn$ i.i.d. random samples from the complex
Gaussian distribution (arranged in matrix form), and subsequently
normalizing each column of $\mathbf{A}$. In Fig. \ref{fig: MBCoherenceGauss},
we plot the probability of meeting condition (\ref{eq: d cum cohe level1})
as a function of the number of measurements $m$ for different value
of $d_{1}$. As before, we set $K=3$ and $n=501$. The advantage
of the proposed MB-coherence condition over previous conditions is
even more marked than in the MIMO radar setting, and the reduction
in the number of measurements when the number of branches $d_{1}$
of MBMP is increased can be seen from the figure. For instance, while
using Neuman ERC (i.e., MB-coherence($\emptyset,d_{1}$) with $d_{1}=1$)
we need about $m=180$ measurements to guarantee recovery with probability
$.95$, by using $d_{1}=2$ branches the same guarantee is obtained
with only $m=155$ measurements, and we can further reduce the measurements
to $m=140$, using $d_{1}=4$ branches.

The MB-coherence condition, investigated in Figures \ref{fig: MBCoherenceMIMO}
and \ref{fig: MBCoherenceGauss}, is a uniform recovery guarantee
in the sense that it guarantees recovery of any $K$-sparse signal.
Specifically, a uniform recovery guarantee certifies that, given a
fixed instantiation of the random measurement matrix $\mathbf{A}$,
all possible $K$-sparse signals are recovered correctly \cite{EldarCS}.
Uniform recovery conditions capture the worst-case behavior of a measurement
matrix $\mathbf{A}$. However, if one focuses on typical recovery,
the conditions to obtain successful (non-uniform) recovery with high
probability can be relaxed significantly, as shown in the numerical
examples below.

To investigate the typical recovery behavior of MBMP, we present numerical
results for the non-uniform recovery setting (i.e., at each realization,
the matrix $\mathbf{A}$ and the signal $\mathbf{X}$ are drawn independently
at random), and we explore the localization performance in the presence
of noise comparing MBMP with other SMV and MMV algorithms. For the
SMV setting, we implement target localization using LASSO applying
the algorithm proposed in \cite{RahutStromer}. In addition, we implement
the discrete version of beamforming (which, in the SMV setup, identifies
the support's elements as the $K$ indices $g$ that maximize $\left|\mathbf{a}_{g}^{H}\mathbf{y}\right|$),
ORMP, CoSaMP and FOCUSS \cite{focuss}. For the MMV scenario, we compare
MBMP with RA-ORMP, M-FOCUSS, and the discrete version of MUSIC (which
identifies the support's elements as the $K$ indices $g$ that maximize
$\left\Vert \mathbf{a}_{g}^{H}\mathbf{U}\right\Vert _{2}$, where
$\mathbf{U}=\operatorname{orth}\left(\mathbf{Y}\right)$ is an estimate
of the signal subspace \cite{Davies10}). As stated above, MBMP with
$\mathbf{d}=\left[1,\ldots,1\right]$ reduces to ORMP (RA-ORMP) in
the SMV (MMV) scenario.

\begin{figure}[ptb]
\centering{}\includegraphics[width=3.4411in,height=2.6878in]{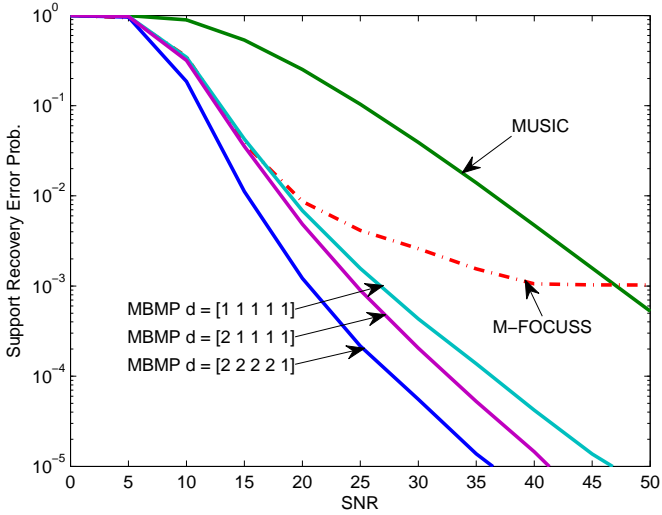}\caption{Probability of support recovery error as a function of the SNR. The
system settings are $Z=250$, $n=251$, $M=N=4$, $l=5$ and $K=5$
targets with $\left\vert x_{k,l}\right\vert =1$ for all $k$ and
$l$.}
\label{fig: M4 N4 P5 SNRvec}
\end{figure}

\begin{figure}[ptb]
\centering{}\includegraphics[width=3.4411in,height=2.7069in]{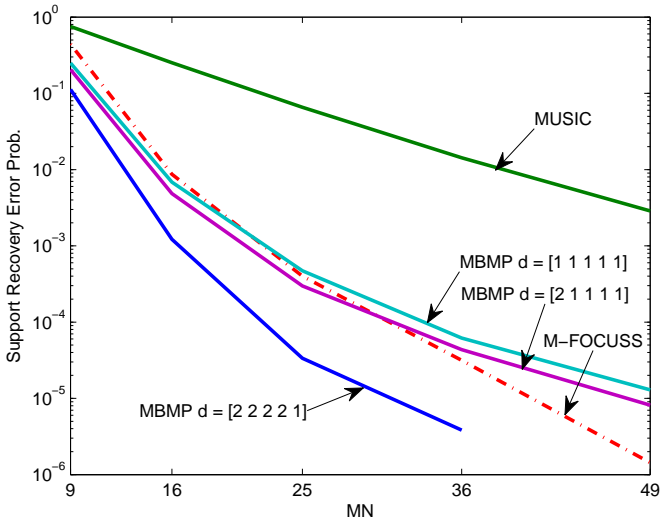}\caption{Probability of support recovery\ error as a function of the number
of rows $MN$ of $\mathbf{A}$. MMV setup ($l=5$). The system settings
are $Z=250$, $n=251$ and $K=5$ targets with $\left\vert x_{k,l}\right\vert =1$
for all $k$ and $l$. SNR is $20$ dB.}
\label{fig: P5 snr20 MNvec}
\end{figure}
We define a support recovery error event when the estimated support
does not coincide with the true one. For algorithms that return an
estimate $\mathbf{\hat{X}}$ of the sparse signal $\mathbf{X}$ (e.g.,
LASSO and M-FOCUSS), the support is identified as the $K$ largest
norm rows of the signal $\mathbf{\hat{X}}$. We further assume that
the noise variance $\sigma^{2}$ is known, since this information
is needed by LASSO and M-FOCUSS. The virtual aperture is set to $Z=250$
(thus $n=251$ grid/points), and numerical results were obtained for
$K=5$ targets.

\begin{figure}[ptb]
\centering{}\includegraphics[width=3.4411in,height=2.655in]{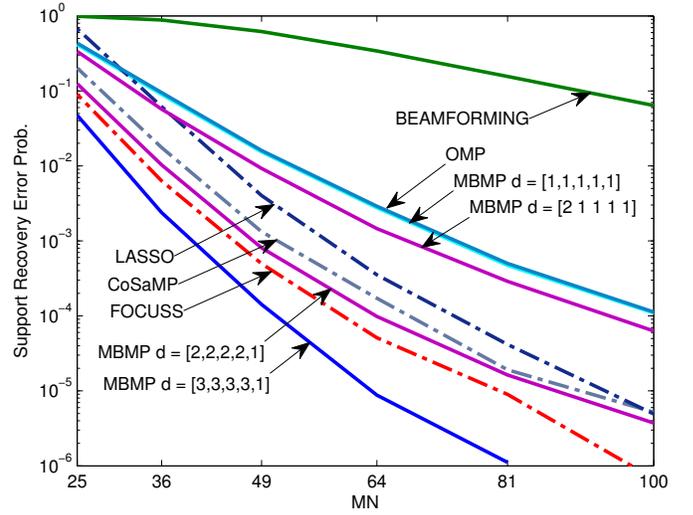}\caption{Probability of support recovery error as a function of the number
of rows $MN$ of $\mathbf{A}$. SMV setup ($l=1$). The system settings
are $Z=250$, $n=251$ and $K=5$ targets with $\left\vert x_{k}\right\vert =1$
for all $k$. SNR is $20$ dB.}
\label{fig: P1 snr20 MNvec}
\end{figure}

In Fig. \ref{fig: M4 N4 P5 SNRvec}, we address an MMV setting ($l=5$)
and we investigate the probability of support recovery error as a
function of the SNR. We set the number of antenna elements $M=N=4$.
The figure supports the theoretical findings of Section \ref{Sec_MBMP_guaranty}
that increasing the number of MBMP branches for MBMP translates into
an SNR gain. In addition, MBMP has performance superior to both M-FOCUSS
and MUSIC. The floor incurred by M-FOCUSS is due to the inability
of this method to exploit the signal subspace information (i.e., it
is not rank aware \cite{Davies10}). In addition, MBMP requires a
much smaller SNR than MUSIC: for instance, to achieve a probability
of error of $10^{-3}$, MUSIC requires SNR $=47$ dB, while MBMP with
$\mathbf{d}=\left[2,2,2,2,1\right]$ achieves the same probability
of error with just $20$ dB. This gain is ascribed to the iterative
signal support estimation performed by MBMP, which differs from the
non-iterative support estimation performed by MUSIC.

In Fig. \ref{fig: P5 snr20 MNvec}, we fix the number of snapshots
($l=5$), the SNR ($20$ dB), and we illustrate the probability of
support recovery error as a function of the number of measurements
$MN$ (number of rows of the matrix $\mathbf{A}$). We evaluate five
different element configurations: $\left(M,N\right)=\left(3,3\right)$,
$\left(4,4\right)$, $\left(5,5\right)$, $\left(6,6\right)$ and
$\left(7,7\right)$. It can be seen that, by increasing the complexity
of MBMP, the probability of error can be decreased even when we use
a limited number of antenna elements (e.g., MBMP with $\mathbf{d}=\left[2,2,2,2,1\right]$
achieves a probability of error close to $10^{-5}$ with $MN=25$).
Moreover, in all cases, MBMP performs much better than MUSIC.

In Fig. \ref{fig: P1 snr20 MNvec}, we analyze the probability of
support recovery error as a function of the number of measurements
$MN$ in an SMV setting ($l=1$). We evaluate six different configurations:
$\left(M,N\right)=\left(5,5\right)$, $\left(6,6\right)$, $\left(7,7\right)$,
$\left(8,8\right)$, $\left(9,9\right)$ and $\left(10,10\right)$,
and keep the SNR $=20$ dB. In an SMV setting, MUSIC cannot be applied
since the signal is not full-rank ($\operatorname*{rank}\left(\mathbf{X}\right)=1<K$).
In addition to MBMP and FOCUSS (the SMV version of M-FOCUSS), we performed
target DOA recovery using beamforming, LASSO and CoSaMP. From Fig.
\ref{fig: P1 snr20 MNvec} it can be seen that beamforming is not
well suited to the sparse recovery framework, incurring in a very
high probability of error as compared to sparse recovery methods.
Moreover, although in a SMV scenario the signal subspace is not available,
MBMP still provides competitive performance with respect to other
algorithm. Comparing Fig. \ref{fig: P5 snr20 MNvec} and Fig. \ref{fig: P1 snr20 MNvec},
it can be appreciated that by having multiple snapshots ($l>1$) and
using MBMP, the number of antenna elements can be dramatically reduced.
\begin{figure}[ptb]
\centering{}\includegraphics[width=3.4411in,height=2.655in]{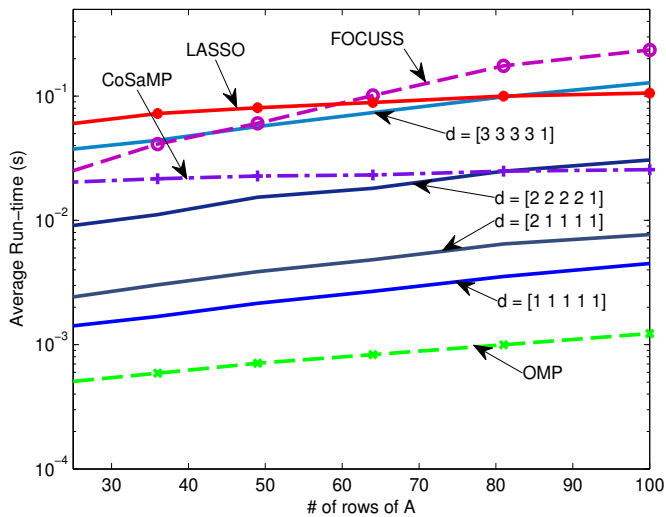}\caption{Average execution time for different CS algorithms as a function of
the number of rows of $\mathbf{A}$. SMV setup ($l=1$). The system
settings are $n=251$ and $K=5$.}
\label{fig: P1 AvgTime}
\end{figure}

Finally, we also analyze the complexity of MBMP with respect to other
CS algorithms. Fig. \ref{fig: P1 AvgTime} plots the average run-time
in seconds as a function of the number of measurements (rows of the
matrix $\mathbf{A}$) in an SMV setting ($l=1$). First, it can be
seen how, by properly setting the branch vector of MBMP, we can adjust
the MBMP complexity. Moreover, as discussed above, the figures shows
that MBMP complexity scales proportionally with the number of nodes
in the first $K$ levels of MBMP tree. In particular, the average
run-time of MBMP with $\mathbf{d}=\left[2,1,1,1,1\right]$ is slightly
less than double ($9/5$) that of MBMP with $\mathbf{d}=\left[1,1,1,1,1\right]$,
while the average run-time of MBMP with $\mathbf{d}=\left[2,2,2,2,1\right]$
is approximately $31/5=6.2$ times that of MBMP with $\mathbf{d}=\left[1,1,1,1,1\right]$.
Furthermore, although the computational complexity of MBMP is exponential
in $K$, in the scenario at hand with $K=5$, MBMP has a smaller,
or comparable, complexity to that of LASSO, FOCUSS and CoSaMP, while
providing better performance (e.g., see Fig. \ref{fig: P1 snr20 MNvec}).
We also remark that, whereas OMP complexity is smaller that ORMP (i.e.,
MBMP with $\mathbf{d}=\left[1,1,1,1,1\right]$ in an SMV setup), we
build MBMP around RA-ORMP in order to take full advantage of the rank-aware
property in a MMV setup. This is because, in radar applications, it
is common to have several snapshots and the ability to use the signal
subspace information improves performance.

\section{Conclusions\label{Sec_conc}}

We develop the MBMP algorithm for sparse recovery, and derive a sufficient
condition under which MBMP can recover any sparse signal belonging
to a given support. We then introduce the MB-coherence, and apply
it to derive a sufficient condition under which MBMP can recover \textit{any}
$K$-sparse signal. This condition enables to guarantee the success
of the proposed MBMP for dictionaries that do not satisfy previously
known conditions based on coherence or on cumulative coherence. Furthermore,
we demonstrate by numerical examples that MBMP supports trading off
measurements (e.g. antenna elements) for computational complexity.
Both theoretical guarantees and numerical results illustrate that
MBMP enables recovery with fewer measurements than other practical
algorithms.

\section{Appendix\label{Sec_appendix}}

For the sake of notation, in the Appendix we drop the superscript
$C_{i}$ from $\mathbf{\bar{A}}_{S}^{C_{i}}$ and we use $\mathbf{\bar{A}}_{S}$.

\subsection{Proof of Theorem \ref{th: MBMP recovery}}

We start by proving that, given a node at level $i$ tagged with a
correct provisional support $C_{i}\subset S^{\ast}$, if MB-ERC($S^{\ast},C_{i},d_{i}$)
in (\ref{eq: Rossi condition}) holds then at least one of the $d_{i}$
branches of the node successfully selects an index $g$ from the correct
support set $g\in S\triangleq S^{\ast}\setminus C_{i}$. We follow
similar steps as in the proof that ERC is sufficient for RA-ORMP given
in \cite{Davies10}. The only differences are: (\textit{i}) the use
the $d$\_$\max$ operator; (\textit{ii}) the use of the refined dictionary
$\mathbf{\bar{A}}^{C_{i}}\triangleq\left\{ \mathbf{\bar{a}}_{g}^{C_{i}},g\notin C_{i}\right\} $
when a provisional support $C_{i}$ is available; (\textit{iii}) the
use of the OIR to address a noisy scenario.

Similar to other MP techniques, but with the key difference of the
$d$\_$\max$ operator, in order to guarantee that at least one of
the $d_{i}$ branches of the considered node successfully selects
an atom $\mathbf{\bar{a}}_{g}^{C_{i}}$ from the remaining correct
indices $g\in S$, we require the following
\begin{equation}
\frac{d_{i}\_\max_{g\notin S}\left(\left\Vert \mathbf{U}^{H}\mathbf{\bar{a}}_{g}^{C_{i}}\right\Vert _{2}\right)}{\max_{g\in S}\left\Vert \mathbf{U}^{H}\mathbf{\bar{a}}_{g}^{C_{i}}\right\Vert _{2}}<1,\label{eq: rho}
\end{equation}
where $\mathbf{U}=\operatorname{orth}\left(\Pi_{\mathbf{A}_{C_{i}}}^{\perp}\mathbf{Y}\right)$.
Since $\mathbf{U}=\Pi_{\mathbf{A}_{S}}\mathbf{U}+\Pi_{\mathbf{A}_{S}}^{\perp}\mathbf{U}$,
by using standard norm inequalities, we can upper bound the numerator
of (\ref{eq: rho}) as 
\begin{multline}
d_{i}\_\max_{g\notin S}\left(\left\Vert \mathbf{U}^{H}\mathbf{\bar{a}}_{g}^{C_{i}}\right\Vert _{2}\right)\leq d_{i}\_\max_{g\notin S}\left(\left\Vert \mathbf{U}^{H}\Pi_{\mathbf{A}_{S}}\mathbf{\bar{a}}_{g}^{C_{i}}\right\Vert _{2}\right)\\
+\max_{g\notin S}\left\Vert \mathbf{U}^{H}\Pi_{\mathbf{A}_{S}}^{\perp}\mathbf{\bar{a}}_{g}^{C_{i}}\right\Vert _{2}.\label{eq: pippo}
\end{multline}
By using (\ref{eq: pippo}) and the definition of OIR in (\ref{eq: OIR}),
the left-hand side of (\ref{eq: rho}) can be upper bounded as
\begin{equation}
\frac{d_{i}\_\hspace{-0.03in}\max\limits _{g\notin S}\hspace{-0.03in}\left(\hspace{-0.03in}\left\Vert \mathbf{U}^{H}\mathbf{\bar{a}}_{g}^{C_{i}}\right\Vert _{2}\hspace{-0.03in}\right)}{\max\limits _{g\in S}\left\Vert \mathbf{U}^{H}\mathbf{\bar{a}}_{g}^{C_{i}}\right\Vert _{2}}\hspace{-0.03in}\leq\hspace{-0.03in}\frac{d_{i}\_\hspace{-0.03in}\max\limits _{g\notin S}\hspace{-0.03in}\left(\hspace{-0.03in}\left\Vert \mathbf{U}^{H}\Pi_{\mathbf{A}_{S}}\mathbf{\bar{a}}_{g}^{C_{i}}\right\Vert _{2}\hspace{-0.03in}\right)}{\max\limits _{g\in S}\left\Vert \mathbf{U}^{H}\mathbf{\bar{a}}_{g}^{C_{i}}\right\Vert _{2}}+\textrm{OIR}.\label{eq:pippo2}
\end{equation}
By using standard norm inequalities as in \cite{Davies10}, the first
term of the right-hand side of (\ref{eq:pippo2}) can be upper bounded
as 
\begin{equation}
\frac{d_{i}\_\max\limits _{g\notin S}\left(\left\Vert \mathbf{U}^{H}\Pi_{\mathbf{A}_{S}}\mathbf{\bar{a}}_{g}^{C_{i}}\right\Vert _{2}\right)}{\max\limits _{g\in S}\left\Vert \mathbf{U}^{H}\mathbf{\bar{a}}_{g}^{C_{i}}\right\Vert _{2}}\leq d_{i}\_\max_{g\notin S}\left(\left\Vert \mathbf{\bar{A}}_{S}^{\dag}\mathbf{\bar{a}}_{g}^{C_{i}}\right\Vert _{1}\right).\label{eq: pippo3}
\end{equation}
Using (\ref{eq: pippo3}) into inequality (\ref{eq:pippo2}), we can
conclude that, if (\ref{eq: Rossi condition}) holds, then (\ref{eq: rho})
is guaranteed to hold too. Therefore at least one of the $d_{i}$
branches of the considered node successfully selects an index $g$
from the correct support set $g\in S$.

It remains to prove that, if MB-ERC($S^{\ast},C_{i},d_{i}$) holds
for any node at level $i=1,\ldots,K-1$, then MBMP with branch vector
$\mathbf{d}=\left[d_{1},\ldots,d_{K-1},1\right]$ is guaranteed to
recover $\mathbf{X}$ from the measurements $\mathbf{Y}=\mathbf{AX+E}$.
To prove this, note that if MB-ERC($S^{\ast},C_{i},d_{i}$) holds
for any node at level $i=1,\ldots,K-1$, it follows that a chain of
correct decisions exists along the MBMP tree: MB-ERC holds for the
first node, thus at least one node at level $2$ has a correct provisional
support. Considering such node, since MB-ERC holds there, it will
select a correct index in at least one branch, and we have a node
at level $3$ with correct provisional support, and so on up to level
$K$. Finally, a node at level $K$ tagged with a correct provisional
support $C_{K}\subset S^{*}$ selects the index yielding the smallest
residual, which achieves the global optimal solution to (\ref{eq: Pfocus}),
concluding the proof.

\subsection{Proof of Theorem \ref{th: MBMP recovery2}}

We start by showing that, given a node at level $i$ tagged with a
correct provisional support $C_{i}$, the MB-coherence($C_{i},d_{i}$)
in (\ref{eq: condition d cohe}) implies MB-ERC($S^{*},C_{i},d_{i}$)
in (\ref{eq: Rossi condition}), for any support $S^{\ast}\triangleq S\cup C_{i}$
of cardinality $K$. To achieve this, we use standard arguments (e.g.,
as in \cite{Troop}) and the properties of the $d$\_$\max$ operator.
In details, by using the definition of pseudo-inverse and introducing
the $d$\_$\max$ operator, the left hand-side of (\ref{eq: Rossi condition})
can be upper bounded as
\begin{equation}
d_{i}\_\max_{g\notin S}\left(\left\Vert \mathbf{\bar{A}}_{S}^{\dag}\mathbf{\bar{a}}_{g}^{C_{i}}\right\Vert _{1}\right)\leq\frac{d_{i}\_\max\limits _{g\notin S}\left(\left\Vert \mathbf{\bar{A}}_{S}^{H}\mathbf{\bar{a}}_{g}^{C_{i}}\right\Vert _{1}\right)}{2-\max\limits _{g\in S}\left\Vert \mathbf{\bar{A}}_{S}^{H}\mathbf{\bar{a}}_{g}^{C_{i}}\right\Vert _{1}}.
\end{equation}
It follows that MB-ERC($S^{*},C_{i},d_{i}$) holds for \textit{any}
support $S^{\ast}\triangleq S\cup C_{i}$ of cardinality $K$, if
\begin{equation}
\max_{S,\left\vert S\right\vert =k}\frac{d_{i}\_\max\limits _{g\notin S}\left(\left\Vert \mathbf{\bar{A}}_{S}^{H}\mathbf{\bar{a}}_{g}^{C_{i}}\right\Vert _{1}\right)}{2-\max\limits _{g\in S}\left\Vert \mathbf{\bar{A}}_{S}^{H}\mathbf{\bar{a}}_{g}^{C_{i}}\right\Vert _{1}}<1-\textrm{OIR},
\end{equation}
where $k\triangleq K-\left\vert C_{i}\right\vert $. This can be manipulated
to obtain (\ref{eq: condition d cohe}), thus establishing that the
MB-coherence($C_{i},d_{i}$) condition (\ref{eq: condition d cohe})
implies MB-ERC($S^{\ast},C_{i},d_{i}$). The claim of the theorem
follows by invoking Theorem \ref{th: MBMP recovery}.

\subsection{Testing for MB-coherence}

We develop a practical way to find the smallest integer $d_{i}$ such
that the MB-coherence($C_{i},d_{i}$) in (\ref{eq: condition d cohe})
is met. The following proposition relates the MB-coherence condition
to an integer program, which can be solved using discrete optimization
techniques \cite{combOpt}. We denote $\mathbf{q}_{g}$ as the $g$-th
column of $\mathbf{Q}\triangleq\left\vert \left(\mathbf{\bar{A}}^{C_{i}}\right)^{H}\mathbf{\bar{A}}^{C_{i}}\right\vert $
($\left\vert \cdot\right\vert $ is the element-wise absolute value):

\begin{proposition} Let $\gamma\triangleq\frac{1}{1-\textrm{OIR}}>1$
and $k\triangleq K-\left\vert C_{i}\right\vert $. The smallest integer
$d_{i}$ such that the MB-coherence($C_{i},d_{i}$) in (\ref{eq: condition d cohe})
holds is given by the optimal objective value of
\begin{align}
 & \max_{\mathbf{s},\mathbf{y},\mathbf{z}}\text{\ \ \ \ \ }1+\sum\nolimits _{l=1}^{n}z_{l}\label{eq: MIP}\\
 & \text{s.t.\ }\left\{ \begin{array}{ll}
\left(\mathbf{q}_{j}+\gamma\mathbf{q}_{g}\right)^{T}\left(\mathbf{s+y}\right)\geq y_{j}+z_{g} & \forall g\neq j\\
\sum_{l=1}^{n}s_{l}=k-1\\
\sum_{l=1}^{n}y_{l}=1\\
y_{l}+s_{l}+z_{l}\leq1 & \forall l\\
s_{l},y_{l},z_{l}\in\left\{ 0,1\right\}  & \forall l
\end{array}\right..\nonumber 
\end{align}
 \end{proposition}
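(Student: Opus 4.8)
The plan is to prove the two inequalities that together identify the optimal value of the integer program (\ref{eq: MIP}) with the smallest integer $d_i$ validating MB-coherence($C_i,d_i$) in (\ref{eq: condition d cohe}). Throughout I write $\mathbf{\bar{a}}_g$ for $\mathbf{\bar{a}}_g^{C_i}$, as in the rest of the Appendix, and lean on the central identity $\mathbf{q}_j^T(\mathbf{s}+\mathbf{y})=\sum_{h\in S}\mathbf{Q}(h,j)=\|\mathbf{\bar{A}}_S^H\mathbf{\bar{a}}_j\|_1$, valid for any $0/1$ vectors $\mathbf{s},\mathbf{y}$ whose joint support is the index set $S$, since $\mathbf{Q}(h,j)=|\mathbf{\bar{a}}_h^H\mathbf{\bar{a}}_j|$. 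When $j\in S$ this sum contains the unit diagonal term $\mathbf{Q}(j,j)=1$, so $\|\mathbf{\bar{A}}_S^H\mathbf{\bar{a}}_j\|_1\geq1$ for in-support indices $j$; I will use this repeatedly to discharge the non-binding constraints.

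First I would recast MB-coherence as a competitor count. Fix a support $S$ with $|S|=k$ and let $j^\star\in\arg\max_{g\in S}\|\mathbf{\bar{A}}_S^H\mathbf{\bar{a}}_g\|_1$, so $m_S\triangleq\|\mathbf{\bar{A}}_S^H\mathbf{\bar{a}}_{j^\star}\|_1$ is the in-support term appearing in (\ref{eq: condition d cohe}). Because the $d_i$\_$\max$ operator is nonincreasing in $d_i$, condition (\ref{eq: condition d cohe}) holds for $S$ at branch number $d_i$ exactly when the $d_i$-th largest value of $\|\mathbf{\bar{A}}_S^H\mathbf{\bar{a}}_g\|_1$ over $g\notin S$ falls strictly below $(2-m_S)/\gamma$. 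Hence the smallest $d_i$ that validates $S$ equals $1+c(S)$, where $c(S)\triangleq|\{g\notin S:\ m_S+\gamma\|\mathbf{\bar{A}}_S^H\mathbf{\bar{a}}_g\|_1\geq 2\}|$ counts the out-of-support atoms meeting the threshold. The outer $\max_{S}$ in (\ref{eq: condition d cohe}) then makes the quantity sought by the Proposition equal to $1+\max_{|S|=k}c(S)$, which I must match to the optimum of (\ref{eq: MIP}).

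Next I would read the program combinatorially: $\mathbf{s}+\mathbf{y}$ is the indicator of a candidate support $S$ (the equalities $\sum s_l=k-1$ and $\sum y_l=1$ give $|S|=k$), $\mathbf{y}$ marks a distinguished index, and $\mathbf{z}$ (forced disjoint from $S$ by $y_l+s_l+z_l\leq1$) is a candidate competitor set $Z$, so the objective is $1+|Z|$. By the identity above, the constraint with $y_j=z_g=1$ reads $\|\mathbf{\bar{A}}_S^H\mathbf{\bar{a}}_{j}\|_1+\gamma\|\mathbf{\bar{A}}_S^H\mathbf{\bar{a}}_g\|_1\geq 2$, precisely the competitor test defining $c(S)$, whereas the constraint with $y_j=1,z_g=0$ reduces to $\|\mathbf{\bar{A}}_S^H\mathbf{\bar{a}}_{j}\|_1+\gamma\|\cdots\|_1\geq1$, automatic since the in-support term is $\geq1$. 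For ``optimum $\leq 1+\max_S c(S)$'' I would take any feasible point, read off its support $S$ and distinguished index $j$, and combine the binding constraints with $m_S\geq\|\mathbf{\bar{A}}_S^H\mathbf{\bar{a}}_{j}\|_1$ to certify that each index with $z_g=1$ is a genuine competitor, giving $\sum z_l\leq c(S)\leq\max_S c(S)$. For the reverse inequality I would start from a maximizing $S^\star$, set $\mathbf{y}$ to its in-support maximizer and $\mathbf{z}$ to the indicator of its competitor set, and verify feasibility.

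The main obstacle is exactly this feasibility check in the reverse direction: one must confirm that activating $z_g=1$ for every genuine competitor violates none of the remaining constraints of (\ref{eq: MIP}), especially those indexed by pairs $(g,j)$ with $y_j=0$. The cleanest route is to argue that an optimal configuration may always take the distinguished index $\mathbf{y}$ equal to the in-support maximizer $j^\star$, so the competitor inequality is posed against the largest available in-support term and the remaining inequalities are dominated by the unit-diagonal bound $\|\mathbf{\bar{A}}_S^H\mathbf{\bar{a}}_j\|_1\geq1$ for $j\in S$. Establishing that this reduction is without loss of optimality, and that the linear $0/1$ encoding $y_j+z_g$ faithfully reproduces the logical conjunction of ``$j$ is the distinguished in-support index'' and ``$g$ is counted'' without spuriously coupling distinct competitors, is the delicate combinatorial step; once it is secured, the two inequalities above close the argument and the optimal value of (\ref{eq: MIP}) is the smallest $d_i$ satisfying (\ref{eq: condition d cohe}).
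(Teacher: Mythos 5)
Your reformulation of the smallest valid $d_i$ as $1+\max_{\left\vert S\right\vert =k}c(S)$ (with $c(S)$ the competitor count), your characteristic-vector reading of the program, and your argument for the direction ``optimum of (\ref{eq: MIP}) $\leq 1+\max_{S}c(S)$'' are correct, and they follow essentially the same route as the paper's proof, which encodes a triple $(S,j,\mathcal{G})$ by the binary vectors $(\mathbf{s},\mathbf{y},\mathbf{z})$. The genuine gap is the reverse direction, which you do not close: you explicitly defer the ``delicate combinatorial step'' of verifying that the natural encoding of a maximizing support $S^{\star}$, its in-support maximizer $j^{\star}$, and its competitor set $\mathcal{G}$ is feasible for (\ref{eq: MIP}), and the route you sketch for closing it does not work. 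Writing $\mathbf{\bar{a}}_{g}$ for $\mathbf{\bar{a}}_{g}^{C_{i}}$, the constraints that threaten feasibility are those indexed by pairs $(g,j)$ with $z_{g}=1$, $y_{j}=0$ and $s_{j}=0$, i.e.\ $j\notin S^{\star}$: there the right-hand side equals $1$, while the left-hand side $\left\Vert \mathbf{\bar{A}}_{S^{\star}}^{H}\mathbf{\bar{a}}_{j}\right\Vert _{1}+\gamma\left\Vert \mathbf{\bar{A}}_{S^{\star}}^{H}\mathbf{\bar{a}}_{g}\right\Vert _{1}$ contains no unit diagonal term at all, so your ``unit-diagonal bound $\left\Vert \mathbf{\bar{A}}_{S}^{H}\mathbf{\bar{a}}_{j}\right\Vert _{1}\geq1$ for $j\in S$'' disposes only of the pairs with $j\in S^{\star}$, which were never the problem. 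Quantitatively, the competitor property of $g$ yields only $\gamma\left\Vert \mathbf{\bar{A}}_{S^{\star}}^{H}\mathbf{\bar{a}}_{g}\right\Vert _{1}\geq2-m_{S^{\star}}$, where $m_{S^{\star}}\triangleq\max_{h\in S^{\star}}\left\Vert \mathbf{\bar{A}}_{S^{\star}}^{H}\mathbf{\bar{a}}_{h}\right\Vert _{1}\geq1$; whenever $m_{S^{\star}}>1$ (the generic case) and some atom $\mathbf{\bar{a}}_{j}$ with $j\notin S^{\star}$ is nearly orthogonal to the atoms of $S^{\star}$, that left-hand side can drop below $1$, so the natural encoding can be genuinely infeasible, not merely unverified.

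For what it is worth, the paper's own proof is a sketch with exactly the same lacuna: it builds the same characteristic vectors and asserts, without checking the constraints with $y_{j}+z_{g}=1$, that these vectors ``maximize problem (\ref{eq: MIP}).'' So you have located the real pressure point of the proposition, but naming a gap is not the same as closing it, and your attempt reduces the statement to an unproven claim supported by a misdirected hint. A complete argument would have to either prove that the cross constraints never bind at a maximizing configuration, or repair the formulation so that they are vacuous by construction, e.g.\ replacing the right-hand side $y_{j}+z_{g}$ by $2\left(y_{j}+z_{g}-1\right)$, which makes the constraint trivial unless $y_{j}=z_{g}=1$ and renders the equivalence with (\ref{eq: condition d cohe}) immediate by the two inequalities you already established.
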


\begin{proof}Because of space limitation, we provide a sketch of
the proof. In particular, the proof follows by exploiting the one-to-one
correspondence between a set with $k$ elements out of $n$, and its
characteristic vector (i.e., a binary vector with $k$ ones and $n-k$
zeros). Let $d_{i}$ be the smallest integer such that (\ref{eq: condition d cohe})
holds. Then we have a support $S$ of cardinality $k$, an index $j\in S$,
and a set $\mathcal{G}$, such that $\left\vert \mathcal{G}\right\vert =d_{i}-1$,
$S\cap\mathcal{G}=\emptyset$, and $\left\Vert \mathbf{\bar{A}}_{S}^{H}\mathbf{\bar{a}}_{j}^{C_{i}}\right\Vert _{1}+\gamma\left\Vert \mathbf{\bar{A}}_{S}^{H}\mathbf{\bar{a}}_{g}^{C_{i}}\right\Vert _{1}\geq2$
$\forall g\in\mathcal{G}$. Given such index $j$, and the sets $S$
and $\mathcal{G}$, we can consider the associated characteristic
(binary) vectors $\mathbf{s}$, $\mathbf{y}$, and $\mathbf{z}$ (i.e.,
$y_{l}=1$ iff $l=j$; $s_{l}=1$ iff $l\in S\setminus j$; and $z_{l}=1$
iff $l\in\mathcal{G}$). Since $d_{i}=1+\left\vert \mathcal{G}\right\vert =1+\sum\nolimits _{l=1}^{n}z_{l}$,
it follows that the vectors $\mathbf{s}$, $\mathbf{y}$, and $\mathbf{z}$
maximize problem (\ref{eq: MIP}). The converse is obtained by reversing
the above argument, concluding the proof.\end{proof}

\end{document}